\documentclass[11pt]{article}

\usepackage[margin=1in]{geometry}
\usepackage{times}
\usepackage{amsfonts,amsmath,amssymb,mathtools}
\usepackage{amsthm}
\newtheorem{definition}{Definition}
\newtheorem{lemma}{Lemma}
\newtheorem{remark}{Remark}
\newtheorem{claim}{Claim}
\newtheorem{theorem}{Theorem}
\usepackage{microtype}
\usepackage{enumitem}
\usepackage{xcolor}
\usepackage{graphicx}
\usepackage{url}
\usepackage[numbers]{natbib}
\usepackage[colorlinks=true,linkcolor=blue,citecolor=blue,urlcolor=blue]{hyperref}

\title{Multiple Planted Structures Below $\sqrt{n}$:
An SoS Integrality Gap and an SQ Lower Bound}

\author{
Matvey Mosievskiy and Lev Reyzin\thanks{Supported in part by NSF grant ECCS-2217023}\\
Department of Mathematics, Statistics, \& Computer Science\\
University of Illinois Chicago\\
\small{\texttt{\{mmosie2,lreyzin\}@uic.edu}}
}

\date{}
\newcommand{\E}{\mathbb{E}}
\newcommand{\Prb}{\mathbb{P}}
\newcommand{\Var}{\operatorname{Var}}
\newcommand{\wtE}{\widetilde{\mathbb{E}}}
\newcommand{\cX}{\mathcal{X}}
\newcommand{\cP}{\mathcal{P}}
\newcommand{\T}{\mathcal{T}}
\newcommand{\supp}{\operatorname{supp}}
\newcommand{\Comp}{\operatorname{Comp}}
\newcommand{\sep}{\operatorname{sep}}

\begin{document}
\maketitle

\begin{abstract}
We study computational limitations in \emph{multi-plant} average-case inference problems, in which $t$ disjoint planted structures of size $k$ are embedded in a random background on $n$ elements.
A natural parameter in this setting is the total planted size $K := kt$.
For several classic planted-subgraph problems, including planted clique, existing algorithmic and lower-bound evidence suggests a characteristic computational threshold near $\sqrt{n}$ in the single-plant setting.

Our main result is a Sum-of-Squares (SoS) integrality gap for refuting the presence of multiple planted cliques.
Specifically, for $G \sim G(n,1/2)$, we construct a degree-$d$ SoS pseudoexpectation for the natural relaxation that maximizes the total size of up to $t$ disjoint cliques.
Throughout the regime
$
kt \le n^{1/2 - c\sqrt{d/\log n}},
$
for a universal constant $c>0$, this relaxation achieves objective value $kt(1-o(1))$, and therefore degree-$d$ SoS cannot certify an upper bound below $kt$.
This extends the planted-clique SoS lower bounds of~\cite{BarakHKKMP19} to a multi-plant setting with explicit disjointness constraints.
As complementary evidence from a different computational model, we prove a lower bound in the statistical query (SQ) framework, extending the results of~\cite{FeldmanGRVX17}.
We show that for detecting $t$ disjoint planted $k \times k$ bicliques (equivalently, a row-mixture distribution), when $kt = O(n^{1/2-\delta})$ for any fixed $\delta>0$, no polynomial-time SQ algorithm can distinguish the planted and null distributions with constant advantage.
\end{abstract}

\section{Introduction}
A broad family of average-case inference tasks can be phrased as distinguishing a
null random object from a planted alternative. Many such problems exhibit a
statistical--computational gap: even when detection is information-theoretically
possible, known polynomial-time algorithms fail below a characteristic signal
level.
A canonical example is the \emph{planted clique} problem~\citep{Jerrum1992,Kucera95},
where the goal is to detect a clique of size $k$ embedded in an
Erdős--Rényi random graph $G(n,1/2)$. Despite extensive research, no efficient
algorithm is known to detect cliques significantly smaller than $O(\sqrt{n})$,
and multiple computational frameworks provide evidence of hardness in this
regime~\citep{BarakHKKMP19,ChenMZ25,FeigeK00,FeldmanGRVX17,Jerrum1992,MekaPW15}.
The assumed hardness of planted clique has various implications, including for
cryptography~\citep{AbramBIKN23,ApplebaumBW10,JuelsP00}.

Motivated by this, we study \emph{multi-plant} settings, where the planted signal
is distributed across many disjoint structures. Related multi-clique formulations have appeared under the name \emph{planted clique cover}~\citep{Dughmi2014} and the \emph{stochastic block model}~\citep{HollandLL1983}.
In our model, there are $t$ disjoint planted
structures, each of ``size'' $k$, embedded into an ambient domain of size $n$.
We write $K := kt$ for the \emph{total planted size}. A recurring question is
whether planting many small structures can overcome the $\sqrt{n}$ barrier that
appears in the single-plant setting---i.e., whether the relevant computational
threshold is governed primarily by $K$.
We show that the multi-plant parameter $K=kt$ behaves like the single-plant size in terms of computational threshold lower bounds:
\begin{itemize}[leftmargin=2em,nosep]
  \item \textbf{Sum-of-Squares (SoS).} Our main result is an integrality gap for the SoS relaxation of refuting the presence of $t$ disjoint $k$-cliques in $G(n,1/2)$.
  \item \textbf{Statistical Query (SQ).} As a complementary result, we prove an SQ lower bound for detecting $t$ disjoint planted $k\times k$ bicliques in a bipartite graph, phrased as a distribution-testing problem.
\end{itemize}

Our lower bounds only address the regime $kt \le n^{1/2-\varepsilon}$.
In the regime $k \ll \sqrt{n} \ll kt$, we conjecture that the relevant detection threshold is
$
kt \approx {n}/{k}$
equivalently $t k^2 \approx n,$
as below this scale simple aggregate edge- and degree-based statistics have vanishing signal-to-noise.


\subsection{Main result (SoS): planted cliques in \texorpdfstring{$G(n,1/2)$}{G(n,1/2)}}

Our SoS result essentially states the following. There exist absolute constants $c_0,c_1>0$ such that for all sufficiently large $n$ and every degree parameter $d=d(n)$ in the range where the truncation parameter $\tau$ can be chosen as in Section~\ref{sec:sos_construction} (in particular, $d\le c_1\log n$), the degree-$d$ Sum-of-Squares (SoS) relaxation for the ``maximum total planted size'' program (defined in Section~\ref{sec:sos_program}) achieves value $\ge kt (1-o(1))$ throughout the regime
$
kt \;\le\; n^{1/2 - c_0\sqrt{d/\log n}}.
$
This result extends the planted-clique Sum-of-Squares integrality gaps of~\cite{BarakHKKMP19} and \cite{MekaPW15} to a multi-plant setting. The main technical novelty is enforcing disjointness across labels while maintaining a low-degree calibrated pseudoexpectation. As a consequence, throughout the stated regime, degree-$d$ SoS cannot certify that the maximum total size of $t$ disjoint cliques in $G\sim G(n,1/2)$ is smaller than $kt$; in particular, it cannot refute the existence of $t$ disjoint $k$-cliques.

This is meaningful as a ``refutation'' target because under the null $G\sim G(n,1/2)$ the largest clique has size $(2+o(1))\log_2 n$ w.h.p.; in particular, whenever $k\gg \log n$ the true combinatorial optimum (maximum total size of $t$ disjoint cliques) is w.h.p. strictly smaller than $kt$, even though the SoS relaxation still attains objective value $\approx kt$.

\subsection{Complementary result (SQ): planted \texorpdfstring{$k\times k$}{k x k} bicliques}

Our SQ lower bound is as follows.
Fix $\delta>0$. Consider the detection problem for $t$ disjoint planted $k\times k$ bicliques in a bipartite graph (equivalently, the row-mixture distributions $\{\mathsf D_S\}$ defined in Part~\ref{part:sq}). If $kt = O(n^{1/2-\delta})$, then no randomized SQ algorithm making $\mathrm{poly}(n)$ queries can distinguish $\mathsf D_S$ (for unknown planting $S$) from the null distribution with constant advantage over random guessing. Importantly, our SQ hard family ranges over \emph{all} ordered equal-size plantings; in particular, standard reductions from the single-plant setting do not directly yield such an equal-size multi-plant lower bound (see Remark~\ref{rem:partition}).
This is not merely a technicality: a na\"ive partition-based reduction would only suggest a computational threshold around
$k\lesssim \sqrt{n/t}$ (equivalently $kt\lesssim \sqrt{nt}$), whereas our direct construction reaches the substantially stronger barrier $kt\lesssim \sqrt{n}$.

The closest prior result to our SQ lower bound is due to ~\cite{KothariVWX23}, who showed that \emph{recovering} multiple planted cliques (or a full planted partition) is as hard as detecting a single planted clique.
They also show a detection--recovery gap: planting sufficiently many dense cliques can make \emph{detection} easy against random graphs even when recovery remains hard.
Thus, the implications of our results for recovery are covered by~\cite{KothariVWX23}, but not our main results on detection.
Moreover, their work applies to low-degree methods, whereas ours is in the SQ model; these models are nearly equivalent, but the equivalence requires additional assumptions~\citep{BrennanBHLS20}.

\section{A Sum-of-Squares Integrality Gap for Multiple Planted Cliques}\label{part:sos}

We follow the general calibration-and-truncation framework of
\cite{BarakHKKMP19,MekaPW15} for constructing Sum-of-Squares pseudoexpectations.
The main adaptation is that we work with double-indexed variables $x_{i,j}$ indicating
membership of vertex $i$ in planted clique $j$, together with hard disjointness
constraints across labels.
A key observation is that this multi-label structure does not complicate the PSD analysis:
inconsistent configurations (e.g., a vertex assigned to two labels) have vanishing 
planted expectations and thus vanishing calibrated pseudo-moments (Lemma~\ref{lem:coefbd}(1)),
and the coefficient bound $(kt/n)^{|V|}$ (Lemma~\ref{lem:coefbd}(2)) has the same form 
as in the single-plant case with $k$ replaced by the total planted size $kt$.
Once these facts are established, the ribbon-based separator and factorization arguments
from~\cite{BarakHKKMP19} carries over: every contributing $(A,B)$-ribbon admits a separator
of size at most $\min\{|\supp(A)|,|\supp(B)|\}$, and the terminal remainder in the
separator-layer decomposition vanishes (Lemma~\ref{lem:psd_core}).

\subsection{Planted cliques and SoS preliminaries}

\subsubsection{Model}

We consider the \emph{planted clique model} $\mathcal{G}(n, 1/2, k, t)$. A draw $(G,X)$ is generated as follows:
\begin{enumerate}[leftmargin=2em,nosep]
    \item Choose $t$ disjoint subsets $C_1,\dots,C_t\subseteq[n]$, each of size $k$, uniformly at random.
    \item Let $H$ be the graph with edges exactly within each $C_j$ (so each $C_j$ induces a clique).
    \item Sample $G_0\sim G(n,1/2)$.
    \item Output $G = G_0 \cup H$.
    \item Let $X=(x_{i,j})_{i\in[n],j\in[t]}\in\{0,1\}^{n\times t}$ be the indicator matrix with $x_{i,j}=1$ iff $i\in C_j$.
\end{enumerate}
Then $\sum_{j=1}^t x_{i,j}\le 1$ for all $i$, and $\sum_{i=1}^n x_{i,j}=k$ for all $j$.
We work with Boolean variables $\{x_{i,j}\}_{(i,j)\in[n]\times[t]}$ and identify polynomials with their multilinear representatives (using $x_{i,j}^2=x_{i,j}$). Let $\cP_{n,t}^d$ be the space of multilinear polynomials of degree at most $d$.

\begin{definition}[degree-$d$ pseudoexpectation]\label{def:pseudoexp}
A linear operator $\wtE:\cP_{n,t}^d\to\mathbb{R}$ is a degree-$d$ pseudoexpectation if
\begin{enumerate}[leftmargin=2em,nosep]
\item Normalization: $\wtE[1]=1$.
\item PSD: $\wtE[p^2]\ge 0$ for every polynomial $p\in \cP_{n,t}^{\lfloor d/2\rfloor}$.
\end{enumerate}
\end{definition}

\begin{definition}[constraints]
A degree-$d$ pseudoexpectation $\wtE$ \emph{satisfies} a constraint $q=0$ (where $q\in\cP_{n,t}^d$) if for every polynomial $r$ such that $\deg(qr)\le d$, we have $\wtE[qr]=0$.
\end{definition}

\subsubsection{Planted-clique constraints and the SoS refutation program}\label{sec:sos_program}
We encode ``$t$ disjoint cliques'' using only hard zero-constraints; the size is enforced via an objective. (Booleanity $x_{i,j}^2=x_{i,j}$ is enforced implicitly by working in the Boolean quotient described above.)
\begin{enumerate}[leftmargin=2em,nosep]
\item \textbf{Clique (non-edges):} for every non-edge $\{u,v\}\notin E(G)$ and each $r\in[t]$, $x_{u,r}x_{v,r}=0.$
\item \textbf{Disjointness:} for all $i\in[n]$ and $j_1\ne j_2$, $x_{i,j_1}x_{i,j_2}=0.$
\end{enumerate}
The associated degree-$d$ SoS relaxation is the optimization problem
\vskip-3pt
\begin{equation}\label{sosopt}
\mathrm{OPT}_{d}(G):=\max \; \wtE\Big[\textstyle\sum_{j=1}^t\textstyle\sum_{i=1}^n x_{i,j}\Big]
\end{equation}
\vskip-3pt
\noindent over all degree-$d$ pseudoexpectations $\wtE$ that satisfy the two constraint families above.
We say degree-$d$ SoS ``refutes'' $t$ disjoint $k$-cliques if it certifies $\mathrm{OPT}_{d}(G)<kt$.

The above program computes the maximum \emph{total} size of up to $t$ disjoint cliques in $G$.
The relaxation does not enforce equal sizes across labels; some labels may be unused.
Nevertheless, certifying $\mathrm{OPT}_d(G) < kt$ implies that $G$ does not contain
$t$ disjoint cliques each of size $k$.
Our integrality gap therefore rules out even this weaker form of refutation.

\subsection{Construction of the pseudoexpectation via truncated Fourier expansion}\label{sec:sos_construction}

We follow the now-standard approach for planted-subgraph SoS lower bounds~\citep{MekaPW15,BarakHKKMP19}: define pseudo-moments by \emph{calibration} (matching planted expectations) and then truncate in a \emph{local} manner so that (i) the resulting operator is well-defined as a low-degree polynomial in $G$, and (ii) the corresponding moment matrix can be shown PSD via a ribbon argument.

\subsubsection{Fourier basis on \texorpdfstring{$G(n,1/2)$}{G(n,1/2)}}

Encode each edge variable as $G_e\in\{\pm 1\}$ with $G_e=1$ if $e\in E(G)$ and $G_e=-1$ otherwise. For $T\subseteq\binom{[n]}{2}$ define $\chi_T(G)=\prod_{e\in T}G_e$. The characters $\{\chi_T\}$ form an orthonormal basis for functions of $G$ under the inner product $\langle f,h\rangle = \E_{G\sim G(n,1/2)}[f(G)h(G)]$.

\subsubsection{Parameters}

Fix $\varepsilon\in(0,1/2)$ with
$
kt = n^{1/2-\varepsilon}.
$
Let $d=d(n)$ be the SoS degree parameter and let $\tau=\tau(n)$ be a truncation parameter.
We assume that $\varepsilon$, $d$, and $\tau$ satisfy
\begin{equation}\label{eq:param_assumptions}
Ct\tfrac{d}{\varepsilon}\le \tau \le \tfrac{\varepsilon}{C}\log n,
\qquad
\varepsilon > C\tfrac{\log\log n}{\log n},
\end{equation}
for a sufficiently large absolute constant $C>0$.
Under these conditions, all truncation, concentration, and PSD arguments
used in Sections~\ref{sec:sos_construction}--\ref{sec:sos_psd} apply.
Eventually one may take $d=\Theta(\varepsilon^2\log n)$ and $\tau=\Theta(d/\varepsilon)$, which satisfies Problem~\eqref{eq:param_assumptions}.

\subsubsection{The truncated operator}

For a set $M\subseteq [n]\times[t]$ with $|M|\le d$ define the monomial
\[
X_M := \textstyle\prod_{(i,j)\in M} x_{i,j}.
\]
Write $S_M:=\{ i\in[n]: \exists j \text{ with } (i,j)\in M\}$ for the set of vertices appearing in $M$.
For each label $r\in[t]$, define $S_r(M):=\{i\in[n]:(i,r)\in M\}$ and let
\[
\mathcal{E}_M := \textstyle\bigcup_{r=1}^t \textstyle\binom{S_r(M)}{2}
\]
be the set of (intra-label) edges \emph{implied} by $M$.
Given an edge set $T\subseteq\binom{[n]}{2}$, define the \emph{test graph}
\[
G_{M,T} := (V,\,T\cup \mathcal{E}_M),\qquad V := V(T)\cup S_M,
\]
and let $U_1,\dots,U_c$ be its connected components.
Let $\Comp(G_{M,T})$ denote the set of connected components of $G_{M,T}$.

\begin{definition}[truncated coefficients]\label{def:trunc}
Let $M\subseteq[n]\times[t]$ with $|M|\le d$ and let $T\subseteq\binom{[n]}{2}$. Define
\[
\widehat{\wtE[X_M]}(T) :=
\begin{cases}
\E_{(G,X)\sim \mathcal{G}(n,1/2,k,t)}\big[\chi_T(G)\,X_M\big],
& \text{if } |V|\le \tau \text{ and } |U_i|\le k \text{ for all }i,\\
0,&\text{otherwise.}
\end{cases}
\]
And 
define
$
\wtE_G[X_M] := \sum_{T\subseteq\binom{[n]}{2}} \widehat{\wtE[X_M]}(T)\,\chi_T(G).
$
\footnote{Note that by definition $\widehat{\wtE[X_M]}(T)=0$ unless $|V(T)\cup S_M|\le \tau$
and every connected component of $G_{M,T}=(V(T)\cup S_M,\,T\cup \mathcal{E}_M)$ has size $\le k$.}
\end{definition}
We note that if for some $(M,T)$ the planted expectation
$
\E_{(G,X)\sim\mathcal G(n,1/2,k,t)}[\chi_T(G)X_M]=0,
$
then the corresponding truncated Fourier coefficient
$\widehat{\wtE[X_M]}(T)=0$, regardless of truncation.
Thus truncation can only eliminate nonzero coefficients; it never introduces new ones.

\subsection{Key lemmas: coefficients, calibration, constraints}\label{sec:sos_key_lemmas}

\subsubsection{A bound on planted Fourier coefficients}

\begin{definition}[consistency of $(M,T)$]\label{def:consistency}
A pair $(M,T)$ is \emph{consistent} if there exists a valid assignment of vertices to labels $1,\dots,t$ consistent with:
(i) each $(i,j)\in M$ constrains vertex $i$ to lie in clique (label) $j$, and
(ii) every edge in $T$ connects two vertices that share a label.
If no such assignment exists, $(M,T)$ is \emph{inconsistent}.
\end{definition}

The bound in the following lemma reflects that every vertex appearing in $(M,T)$ must lie in one of the $kt$
planted positions. Truncation ensures that no large combinatorial amplification occurs.

\begin{lemma}\label{lem:coefbd}
Let $M\subseteq[n]\times[t]$ with $|M|\le d$ and let $T\subseteq\binom{[n]}{2}$. Let $V=V(T)\cup S_M$ and let $U_1,\dots,U_c$ be the connected components of the test graph $G_{M,T}=(V,T\cup \mathcal{E}_M)$.
\begin{enumerate}[leftmargin=2em, nosep]
\item If $(M,T)$ is inconsistent, then $\E_{(G,X)\sim \mathcal{G}(n,1/2,k,t)}[\chi_T(G)\,X_M]=0$.
\item If $(M,T)$ is consistent, then
\vskip-.5em
\[
\Big|\E_{(G,X)\sim \mathcal{G}(n,1/2,k,t)}[\chi_T(G)\,X_M]\Big|
\le t^c \Bigl(\frac{k}{n-|V|+1}\Bigr)^{|V|}
\le \Bigl(\frac{kt}{n-|V|+1}\Bigr)^{|V|}.
\]
\noindent In particular, if $|V|\le \tau$, then
\vskip-.5em
\[
\Big|\E_{(G,X)\sim \mathcal{G}(n,1/2,k,t)}[\chi_T(G)\,X_M]\Big|
\le t^c \Bigl(\frac{k}{n-\tau}\Bigr)^{|V|}
\le \Bigl(\frac{kt}{n-\tau}\Bigr)^{|V|}.
\]

\end{enumerate}
\end{lemma}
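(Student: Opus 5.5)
Both parts follow from conditioning on the planting $X$ (equivalently $C_1,\dots,C_t$) and averaging over $G_0$. Given the planting, $G_e=1$ deterministically for $e\in H$, and for $e\notin H$ the variable $G_e$ is a uniform $\pm1$ independent of everything else. Hence
\[
\E[\chi_T(G)\mid X]=\mathbf 1\{T\subseteq E(H)\},
\]
and therefore
\[
\E[\chi_T(G)X_M]=\Prb\big[X_M=1 \text{ and every edge of } T \text{ lies inside some } C_j\big].
\]
So the quantity is a nonnegative probability. Everything reduces to bounding the probability of the event $\mathcal A$ that the random disjoint sets $C_1,\dots,C_t$ realize $M$ and cover $T$ by intra-clique edges.

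\textbf{Part (1).} On $\mathcal A$, the map sending each $i\in V$ to the unique $j$ with $i\in C_j$ is a valid assignment. It satisfies (i) because $X_M=1$, and (ii) because each edge of $T$ lies inside some $C_j$. Vertices of $V$ outside all cliques cannot occur: every $i\in S_M$ is in some clique, and every $i\in V(T)$ is an endpoint of an edge contained in a clique. So if $(M,T)$ is inconsistent, then $\mathcal A=\emptyset$ and the expectation vanishes. One point to note is that "valid assignment" includes the disjointness requirement, i.e.\ one label per vertex. This matches $M$ containing both $(i,j_1)$ and $(i,j_2)$ being inconsistent.

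\textbf{Part (2).} On $\mathcal A$, each connected component $U_\ell$ of $G_{M,T}$ is monochromatic: edges of $T$ join same-label vertices, and edges of $\mathcal E_M$ join vertices forced to the same label. Moreover, all of $V$ is covered by cliques. Hence $\mathcal A\subseteq\bigcup_{\sigma}\mathcal A_\sigma$, where $\sigma:\{U_1,\dots,U_c\}\to[t]$ ranges over label assignments (at most $t^c$ of them) and $\mathcal A_\sigma=\{U_\ell\subseteq C_{\sigma(\ell)}\ \forall \ell\}$. A union bound gives
\[
\Pr[\mathcal A]\le t^c\max_\sigma\Pr[\mathcal A_\sigma].
\]
The main step is the uniform bound $\Pr[\mathcal A_\sigma]\le (k/(n-|V|+1))^{|V|}$. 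To prove it, fix an ordering $v_1,\dots,v_{|V|}$ of $V$ and reveal the conditions one vertex at a time, $\Pr[v_m\in C_{\sigma(v_m)}\mid v_1,\dots,v_{m-1}\text{ placed}]$. By symmetry of the uniform disjoint planting under permutations of $[n]$ fixing $v_1,\dots,v_{m-1}$, conditionally on the previous placements the remaining membership of $C_j$ is uniform among the unplaced vertices. So this conditional probability is at most $k/(n-m+1)\le k/(n-|V|+1)$. Concretely, $C_j$ has at most $k$ slots left and $v_m$ is one of $n-m+1$ exchangeable remaining vertices. Multiplying over $m$ gives the bound, and the coarser form follows from $t^c\le t^{|V|}$ since $c\le|V|$.

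Finally, if $|V|\le\tau$, then $n-|V|+1\ge n-\tau$, which gives the "in particular" statement. I expect the main (mild) obstacle to be making the exchangeability step rigorous given the disjointness coupling between the $C_j$. I would handle it by realizing the planting as a uniformly random permutation $\pi$ of $[n]$ with $C_j=\pi(\{(j-1)k+1,\dots,jk\})$. Then $\mathcal A_\sigma$ requires $\pi^{-1}(v_m)$ to lie in a fixed block of size $k$ for each $m$, and sequentially revealing $\pi^{-1}(v_1),\pi^{-1}(v_2),\dots$ gives each factor $\le k/(n-m+1)$ directly.
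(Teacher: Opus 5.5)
Your proposal is correct and follows essentially the same route as the paper. You condition on the planting to reduce the coefficient to $\Pr[X_M=1 \wedge T\subseteq E(H)]$, argue that each component of $G_{M,T}$ must lie inside a single clique, union-bound over the at most $t^c$ component labelings, and bound each labeled event by $k^{|V|}/(n)_{|V|}$. Your random-permutation realization is a more explicit, rigorous version of the paper's ``sequential exposure/counting'' step, which computes the exact value $\prod_r (k)_{s_r}/(n)_{|V|}$.
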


\begin{proof}
We use the law of total expectation, conditioning on the planted cliques $X$ (equivalently, on $(C_1,\dots,C_t)$).
For a fixed planting $X$, let
$
F_X := \bigcup_{r=1}^t \binom{C_r}{2}
$
be the set of edges forced to be present by the planted cliques. Recall that $G=G_0\cup H$, where $G_0\sim G(n,1/2)$
has independent unbiased $\{\pm1\}$ edge variables.

For any fixed $X$, we have
$
\E_{G\mid X}\!\left[\chi_T(G)\right]
=
1 \text{if } T\subseteq F_X$ and $0$ otherwise.
Indeed, if $T\subseteq F_X$, then all edges in $T$ are deterministically present, so each $G_e=+1$ and $\chi_T(G)=1$.
If $T\not\subseteq F_X$, then $T$ contains some edge $e$ not forced by $H$; this $G_e$ is an independent unbiased background
edge with $\E[G_e\mid X]=0$, so $\E[\chi_T(G)\mid X]=0$.
Therefore,
$\E_{(G,X)\sim \mathcal{G}(n,1/2,k,t)}\!\left[\chi_T(G)\,X_M\right]
= \E_X\!\left[X_M\cdot \E_{G\mid X}\!\left[\chi_T(G)\right]\right]
= \Pr_X\!\left[X_M=1 \ \wedge\  T\subseteq F_X\right].
$

If $(M,T)$ is inconsistent (Definition~\ref{def:consistency}), then there is \emph{no} planting $X$ satisfying $X_M=1$
for which every edge of $T$ lies within a single planted clique (equivalently, for which $T\subseteq F_X$).
Hence $\Pr_X[X_M=1 \wedge T\subseteq F_X]=0$, and the planted Fourier coefficient is $0$.

So, we assume $(M,T)$ is consistent and let $V=V(T)\cup S_M$. Let $U_1,\dots,U_c$ be the connected components of the test graph
$G_{M,T}=(V,\,T\cup \mathcal{E}_M)$.
In any planting $X$ contributing to the event $\{X_M=1 \wedge T\subseteq F_X\}$, every edge in $T\cup\mathcal{E}_M$
must lie within a planted clique, hence each component $U_\ell$ must be assigned a single label in $[t]$.
(If some $|U_\ell|>k$, this event is impossible.)

Fix any assignment $\sigma:\Comp(G_{M,T})\to[t]$ of labels to components that is consistent with the pinned variables in $M$.
For each $r\in[t]$, let $V_r:=\bigcup_{\ell:\,\sigma(U_\ell)=r} U_\ell$ and set $s_r:=|V_r|$, so $\sum_r s_r=|V|$.
Since $(C_1,\dots,C_t)$ are sampled as ordered disjoint $k$-subsets, a sequential exposure/counting argument gives
\[
\Prb\bigl[\forall r\in[t],~ V_r\subseteq C_r \bigr]
~=~ \frac{\prod_{r=1}^t (k)_{s_r}}{(n)_{|V|}}
~\le~ \frac{\prod_r k^{s_r}}{(n)_{|V|}}
~=~ \frac{k^{|V|}}{(n)_{|V|}}
~\le~ \Bigl(\frac{k}{n-|V|+1}\Bigr)^{|V|}.
\]
There are at most $t^c$ choices of $\sigma$ (one label per component), hence
$
\Prb\!\left[X_M=1 \wedge T\subseteq F_X\right]
\le t^c\Bigl(\frac{k}{n-|V|+1}\Bigr)^{|V|}.
$
This proves the first bound in (2), and the displayed bounds for the case $|V|\le \tau$ follow by replacing $|V|$ with $\tau$
in the denominator and using $c\le |V|$ to get
$t^c k^{|V|}\le (kt)^{|V|}$.
\end{proof}

\subsubsection{Calibration}

The next lemma formalizes that, for coefficients supported on ``local'' Fourier characters that pass the truncation filter, the pseudoexpectation matches the planted expectation on average over the null.

\begin{lemma}\label{lem:calibration_sos}
Let $f_G(X)=\sum_{|M|\le d} c_M(G)\,X_M$ be a polynomial in $\{x_{i,j}\}$ of degree at most $d$. Assume each coefficient function $c_M(G)$ has a Fourier expansion $c_M(G)=\sum_T \widehat{c_M}(T)\chi_T(G)$ supported only on sets $T$ satisfying
$
|V(T)\cup S_M|\le \tau$
and every component of $(V(T)\cup S_M,\,T\cup \mathcal{E}_M)$ has size $\le k.$
Then
$
\E_{G\sim G(n,1/2)}\big[\wtE_G f_G(X)\big]
\;=\;
\E_{(G,X)\sim \mathcal{G}(n,1/2,k,t)}\big[f_G(X)\big].
$
\end{lemma}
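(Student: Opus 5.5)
By linearity in the coefficients, it suffices to prove the identity for a single term $c_M(G)X_M$. Expanding $c_M$ in the Fourier basis, it further suffices to prove it for a single character $c_M(G)=\chi_{T'}(G)$ with $T'$ in the allowed support. That is, we fix $M$ with $|M|\le d$ and $T'$ such that $|V(T')\cup S_M|\le\tau$ and every component of $(V(T')\cup S_M,\,T'\cup\mathcal{E}_M)$ has size $\le k$. We then show
\[
\E_{G\sim G(n,1/2)}\big[\chi_{T'}(G)\,\wtE_G[X_M]\big]=\E_{(G,X)\sim\mathcal G(n,1/2,k,t)}\big[\chi_{T'}(G)X_M\big].
\]
Here we use that $\wtE_G$ acts on the $X$-variables only, so $\wtE_G[\chi_{T'}(G)X_M]=\chi_{T'}(G)\wtE_G[X_M]$. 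This is the natural reading of $\wtE_G f_G(X)$: the operator applied to the polynomial in $X$ whose coefficients are evaluated at $G$.

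Next we substitute Definition~\ref{def:trunc}: $\wtE_G[X_M]=\sum_T \widehat{\wtE[X_M]}(T)\chi_T(G)$. By orthonormality of $\{\chi_T\}$ under $G(n,1/2)$, we have $\E_G[\chi_{T'}\chi_T]=\mathbf 1[T=T']$. Hence the left side collapses to the single coefficient $\widehat{\wtE[X_M]}(T')$. This step needs the sum over $T$ to be finite, which it is, since $T$ ranges over subsets of $\binom{[n]}{2}$.

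Finally, the support hypothesis on $T'$ is exactly the non-truncation condition in Definition~\ref{def:trunc}: $|V|\le\tau$ and all components of $G_{M,T'}$ have size $\le k$. So the coefficient equals $\E_{(G,X)\sim\mathcal G}[\chi_{T'}(G)X_M]$, which is the right side. Summing back over $T'$ and $M$ by linearity, where the planted expectation is also linear in $f$, completes the argument.

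The only delicate point is the bookkeeping: the truncation filter must depend on the same pair $(M,T)$ that appears in the support condition, so each monomial's coefficient is tested against its own $S_M$ and $\mathcal{E}_M$. Since the hypothesis is stated per $M$ in exactly that form, no further work is needed. Note that inconsistent pairs need no separate treatment: both sides are simply the planted coefficient, which vanishes by Lemma~\ref{lem:coefbd}(1).
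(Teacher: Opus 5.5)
Your proposal is correct and follows essentially the same route as the paper. Like the paper, you reduce by linearity to a single monomial $X_M$ and character $\chi_{T'}$, and use orthonormality under $G(n,1/2)$ to isolate the coefficient $\widehat{\wtE[X_M]}(T')$. You then observe that the support hypothesis is exactly the non-truncation condition of Definition~\ref{def:trunc}, so this coefficient equals the planted expectation $\E_{(G,X)}[\chi_{T'}(G)X_M]$.
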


\begin{proof}
Expand $c_M(G)$ and $\wtE_G[X_M]$ in the orthonormal Fourier basis and use Parseval/orthogonality:
only matching characters survive under $\E_{G\sim G(n,1/2)}[\chi_T(G)\chi_{T'}(G)]$.
Under the stated support condition, $\widehat{\wtE[X_M]}(T)=\E_{(G,X)}[\chi_T(G)X_M],$ and the claim follows by linearity.
\end{proof}

\subsubsection{Satisfaction of hard constraints}

\begin{lemma}[proof in~\ref{app:sos_constraints}]
 \label{lem:nonedge}
For any graph $G$ and any non-edge $\{u,v\}\notin E(G)$, any $r\in[t]$, and any polynomial $p$ with $\deg(p)\le d-2$,
$
\wtE_G[x_{u,r}x_{v,r}p]=0.
$
\end{lemma}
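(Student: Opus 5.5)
By linearity it suffices to treat a single monomial $p=X_{M'}$ with $|M'|\le d-2$. Set $M:=M'\cup\{(u,r),(v,r)\}$, so that $x_{u,r}x_{v,r}X_{M'}=X_M$ in the Boolean quotient. If the union collapses and we get $u=v$, the statement is trivial, since $\{u,v\}$ is a non-edge only when $u\neq v$. The plan is to show that $\wtE_G[X_M]=0$ whenever $G_{\{u,v\}}=-1$. The tool is a pairing of Fourier characters that differ exactly in the edge $e=\{u,v\}$.

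\textbf{Pairing.} Because $(u,r),(v,r)\in M$, the vertices $u,v$ lie in $S_r(M)$, and therefore $e\in\mathcal{E}_M$. For any $T$, put $T':=T\triangle\{e\}$. Then the test graphs coincide: $G_{M,T}=G_{M,T'}$. The edge $e$ is already in $\mathcal{E}_M$, and both endpoints are already in $S_M$, so $T\cup\mathcal{E}_M=T'\cup\mathcal{E}_M$ and $V(T)\cup S_M=V(T')\cup S_M$. Consequently the vertex set $V$, its size, and all component sizes are identical, so the truncation filter of Definition~\ref{def:trunc} accepts $T$ exactly when it accepts $T'$.

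Next we compare the planted coefficients. From the proof of Lemma~\ref{lem:coefbd},
\[
\E[\chi_T(G)X_M]=\Pr_X\bigl[X_M=1\wedge T\subseteq F_X\bigr].
\]
On the event $X_M=1$ we have $u,v\in C_r$, hence $e\in F_X$. So the events $\{X_M=1\wedge T\subseteq F_X\}$ and $\{X_M=1\wedge T'\subseteq F_X\}$ coincide, and the two coefficients are equal. Combining this with the filter equivalence gives $\widehat{\wtE[X_M]}(T)=\widehat{\wtE[X_M]}(T')$ for every $T$.

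\textbf{Conclusion.} The involution $T\mapsto T'$ partitions all $T$ into pairs $\{T_0,T_0\cup\{e\}\}$ with $e\notin T_0$. Grouping the Fourier sum by these pairs gives
\[
\wtE_G[X_M]=\sum_{T_0\not\ni e}\widehat{\wtE[X_M]}(T_0)\,\chi_{T_0}(G)\,\bigl(1+G_e\bigr).
\]
Since $\{u,v\}\notin E(G)$, we have $G_e=-1$, so every summand vanishes and $\wtE_G[X_M]=0$. Linearity then gives the claim for general $p$.

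\textbf{Main obstacle.} The delicate step is checking that truncation respects the pairing, meaning that adding or removing $e$ never changes $|V|$ or the component structure. This holds only because $e$ is already an implied edge of $\mathcal{E}_M$ and both endpoints are pinned in $S_M$. This is also why the argument requires $u$ and $v$ to carry the same label $r$; for different labels, $e$ would not lie in $\mathcal{E}_M$ and the pairing would fail. A secondary point is the Boolean reduction: if $M'$ already contains $(u,r)$ or $(v,r)$, the monomial simply merges into $X_M$. If $M'$ contains $(u,j)$ with $j\neq r$, the same argument still applies, because it never uses consistency. In that case Lemma~\ref{lem:coefbd}(1) also makes both coefficients in each pair zero.
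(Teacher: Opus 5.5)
Your proof is correct and matches the paper's argument. The paper writes $\wtE_G[X_M]=A(G)+G_eB(G)$ and shows $A=B$ by proving $\widehat{F}(T)=\widehat{F}(T\cup\{e\})$, using the same two observations you use: on the event $\{X_M=1\}$ the edge $e$ is forced present, and $e\in\mathcal{E}_M$ leaves the truncation filter unchanged. That is your pairing $T\leftrightarrow T\triangle\{e\}$ producing the factor $(1+G_e)$, and your explicit check that $V(T)\cup S_M$ is also unchanged (because $u,v\in S_M$) makes precise a detail the paper leaves implicit.
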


\begin{lemma}\label{lem:disjoint}
For any graph $G$, any $i\in[n]$, any distinct labels $j_1\ne j_2$, and any polynomial $p$ with $\deg(p)\le d-2$,
$
\wtE_G[x_{i,j_1}x_{i,j_2}p]=0.
$
\end{lemma}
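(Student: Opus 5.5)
By linearity in $p$, it suffices to prove the claim for a single monomial. Let $p=X_{M'}$ with $|M'|\le d-2$, and write $M:=M'\cup\{(i,j_1),(i,j_2)\}$. Using multilinearity and $x^2=x$, the product $x_{i,j_1}x_{i,j_2}X_{M'}$ equals the monomial $X_M$ with $|M|\le d$. Hence it is enough to show $\wtE_G[X_M]=0$ for every $G$. By Definition~\ref{def:trunc}, $\wtE_G[X_M]=\sum_T \widehat{\wtE[X_M]}(T)\chi_T(G)$, so it suffices to show that every Fourier coefficient $\widehat{\wtE[X_M]}(T)$ vanishes. This reduction is identity-level, so the conclusion holds pointwise for every graph $G$, not merely in expectation.

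Fix any $T\subseteq\binom{[n]}{2}$. If the truncation filter fails, meaning $|V|>\tau$ or some component of $G_{M,T}$ has size exceeding $k$, then the coefficient is $0$ by definition. Otherwise the coefficient equals the planted expectation $\E_{(G,X)}[\chi_T(G)X_M]$. Since $M$ contains both $(i,j_1)$ and $(i,j_2)$ with $j_1\ne j_2$, any assignment consistent with condition (i) of Definition~\ref{def:consistency} would have to place vertex $i$ in two distinct labels, which is impossible. So $(M,T)$ is inconsistent for every $T$. By Lemma~\ref{lem:coefbd}(1), the planted expectation is therefore $0$.

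One can also see this directly: every planting has disjoint $C_{j_1}$ and $C_{j_2}$, so $X_M\equiv 0$ identically under $\mathcal G(n,1/2,k,t)$, and hence $\E[\chi_T(G)X_M]=0$ for every $T$. Either way every coefficient vanishes, and the lemma follows by linearity over the monomials of $p$.

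The only care needed is bookkeeping in the reduction. Products with $p$ must be interpreted through the multilinear representative, so that $\wtE_G$, which is defined on monomials $X_M$, is applied correctly. When $M'$ already contains $(i,j_1)$ or $(i,j_2)$, the union $M$ still contains both pairs and $|M|\le d$. Beyond this, there is no real obstacle.
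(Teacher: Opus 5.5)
Your proposal is correct and follows the same route as the paper: you reduce to monomials $X_{M'}$ and observe that every Fourier coefficient of $\wtE_G[x_{i,j_1}x_{i,j_2}X_{M'}]$ is either truncated to zero or equals a planted expectation that vanishes, because no planting puts vertex $i$ in two disjoint cliques. Your extra bookkeeping (the multilinear reduction, the check $|M|\le d$, and the alternative appeal to Lemma~\ref{lem:coefbd}(1)) only makes explicit what the paper's two-line proof leaves implicit.
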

\begin{proof}
For any monomial $p=X_{M'}$, the planted expectation $\E_{(G,X)}[\chi_T(G)\,x_{i,j_1}x_{i,j_2}X_{M'}]$ is $0$
for every $T$ because in the planted model a vertex cannot belong to two disjoint planted cliques.
Hence every Fourier coefficient of $\wtE_G[x_{i,j_1}x_{i,j_2}X_{M'}]$ is $0$ and so is the pseudoexpectation.
\end{proof}

\subsubsection{Size and normalization}

The operator $\wtE_G$ is calibrated so that under the null average over $G\sim G(n,1/2)$ it matches the planted expectations of $1$ and of the size statistics $\sum_i x_{i,j}$. The truncation step introduces a small (but nonzero) error, so we record the resulting low-degree approximate satisfaction of the size constraints.

Note that the polynomials used for normalization and size, namely $f=1$ and
$f=\sum_{i=1}^n x_{i,j}$, have coefficients $c_M(G)$ that are constant functions of $G$.
Thus they are Fourier-supported only on $T=\emptyset$, and the support condition in
Lemma~\ref{lem:calibration_sos} holds trivially for these choices of $f_G(X)$.

\begin{lemma}[proof in~\ref{app:sos_constraints}]\label{lem:concentration}
With high probability,
$
\wtE_G[1]=1\pm n^{-\Omega(\varepsilon)}$
and also, for all $j\in[t]$,
$\wtE_G\Big[\sum_{i=1}^n x_{i,j}\Big]=k(1\pm n^{-\Omega(\varepsilon)}).
$
\end{lemma}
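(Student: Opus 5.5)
The plan is a second-moment argument. First compute the mean of each quantity under $G\sim G(n,1/2)$ using Lemma~\ref{lem:calibration_sos}. Then bound the variance by Parseval over the truncated Fourier coefficients, using Lemma~\ref{lem:coefbd}. Chebyshev's inequality and a union bound over $j\in[t]$ finish the argument.

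\textbf{Means.} For $f=1$ and $f=\sum_i x_{i,j}$ the coefficients are constant in $G$, so Lemma~\ref{lem:calibration_sos} applies directly. It gives $\E_G[\wtE_G[1]]=1$ and $\E_G[\wtE_G[\sum_i x_{i,j}]]=\E[|C_j|]=k$. For the size statistic one must check that the $T=\emptyset$ term survives truncation. With $M=\{(i,j)\}$ we have $|V|=1\le\tau$ and components of size $1\le k$, so it does.

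\textbf{Variances.} By orthonormality,
\[
\Var_G\bigl(\wtE_G[1]\bigr)=\sum_{T\neq\emptyset}\widehat{\wtE[1]}(T)^2 .
\]
Only $T$ with $|V(T)|\le\tau$, all components of size $\le k$, and $(\emptyset,T)$ consistent contribute. Each such coefficient is at most $(kt/(n-\tau))^{|V(T)|}$ in absolute value by Lemma~\ref{lem:coefbd}(2).

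Group the sets $T$ by $v=|V(T)|$, with $2\le v\le\tau$. There are at most $n^v$ choices of vertex set and at most $2^{v^2}$ edge sets on it, so
\[
\Var_G\bigl(\wtE_G[1]\bigr)\le\sum_{v=2}^{\tau} n^v\,2^{v^2}\Bigl(\frac{kt}{n-\tau}\Bigr)^{2v}\le\sum_{v\ge 2}\bigl(2^{\tau}\cdot 2n^{-2\varepsilon}\bigr)^{v}.
\]
Here I used $(kt)^2=n^{1-2\varepsilon}$. The condition $\tau\le(\varepsilon/C)\log n$ from \eqref{eq:param_assumptions} gives $2^\tau\le n^{\varepsilon/C}$. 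So the sum is a geometric series with ratio $n^{-\Omega(\varepsilon)}$, and the variance is $n^{-\Omega(\varepsilon)}$.

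For $\sum_i x_{i,j}$, the nonconstant part is $\sum_i\sum_{T\ne\emptyset}\widehat{\wtE[x_{i,j}]}(T)\chi_T$. Collect the total coefficient of each $\chi_T$. For fixed $T$, the vertex $i$ either lies in $V(T)$, giving at most $\tau$ choices with $|V|=v$, or lies outside it, giving $n$ choices with $|V|=v+1$. The outside terms carry an extra factor $kt/n$ each, so summed over $i$ they contribute about $kt$ times the base coefficient.

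After squaring and summing over $T$, the variance is at most $\bigl(O(\tau)+kt\bigr)^2$ times a geometric series of the same type as above. The claim requires this variance to be $k^2 n^{-\Omega(\varepsilon)}$. The planted-coefficient bound really carries $t^c k^{|V|}$, where $c$ is the number of components. Since the outside vertex $i$ forms its own component, it contributes a factor $tk/n$, but only when summed over all labels. For a single label $j$ the factor is $k/n$: the per-label coefficient lacks the $t$ summation because the label is pinned. The outside sum over $i$ therefore contributes $k$, not $kt$. Hence the variance is $k^2\cdot n^{-\Omega(\varepsilon)}$, plus $\tau^2$ times such a series, which is also fine since $k\ge1$. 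A cleaner route is to note that the disconnected vertex $i$ factors: its coefficient is roughly $\Pr[i\in C_j\mid\cdots]$ times the coefficient for $T$ alone. This reduces everything to the $\wtE[1]$ computation, up to the $O(\tau)$ in-support terms.

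\textbf{Concentration and the main obstacle.} Chebyshev at deviation $n^{-\Omega(\varepsilon)}$ (with smaller constants) gives failure probability $n^{-\Omega(\varepsilon)}$ per statistic. The union bound over $t\le n$ labels needs failure probability $o(1/t)$. To get it I would replace Chebyshev by hypercontractivity for degree-$\le\binom{\tau}{2}$ polynomials, or take a higher moment. This gives tails $\exp(-\Omega(n^{\Omega(\varepsilon)/\tau^2}))$, or at least $n^{-\omega(1)}$ under the parameter constraints \eqref{eq:param_assumptions}.

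The main obstacle is the counting in the variance: the factor $2^{v^2}$ for edge sets must be beaten by $n^{-2\varepsilon v}$. This requires $v\lesssim\varepsilon\log n$, which is exactly what $\tau\le(\varepsilon/C)\log n$ supplies. Equally delicate is tracking the $t^c$ factor correctly so that the per-label size statistic gets relative error in terms of $k$ rather than $kt$.
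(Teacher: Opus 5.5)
Your mean and variance computations follow the paper's route: Parseval, Lemma~\ref{lem:coefbd}, grouping $T$ by $|V(T)|$, and using $\tau\le(\varepsilon/C)\log n$ to beat the $2^{O(v^2)}$ count. On the size statistic you are sharper than the paper. The isolated vertex $i$ has its label pinned, so each outside term is at most $t^{c_T}k^{v+1}/(n)_{v+1}$. The total coefficient on $\chi_T$ is therefore at most $(k+\tau)\,(kt/(n-\tau))^{v}$, which is exactly what relative error $n^{-\Omega(\varepsilon)}$ requires. The paper's ``polynomial factor'' remark does not by itself give this.

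The step that fails is your fix for the union bound over $j$. Hypercontractivity for a degree-$D$ polynomial gives tails $\exp(-\Omega(D\,s^{2/D}))$ only once the deviation is $s\ge e^{\Omega(D)}$ standard deviations. At the recommended setting $\tau=\Theta(\varepsilon\log n)$, the degree is up to $D=\binom{\tau}{2}=\Theta(\varepsilon^2\log^2 n)$. The available slack is only $s=n^{O(\varepsilon)}=e^{O(\varepsilon\log n)}$, which is far below $e^{\Omega(D)}$ since $\varepsilon\log n\to\infty$. So you get nothing beyond Chebyshev. Even your own formula gives $n^{\varepsilon/\tau^2}=\exp(\Theta(1/(\varepsilon\log n)))\to 1$, hence only a constant tail, not $n^{-\omega(1)}$.

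Fortunately, no strong tail is needed. The labels in $\mathcal G(n,1/2,k,t)$ are exchangeable, so the coefficient $\widehat{\wtE[x_{i,j}]}(T)=\Pr[i\in C_j,\ T\subseteq F_X]$ does not depend on $j$. Neither does its truncation test, since $\mathcal E_M=\emptyset$ for $M=\{(i,j)\}$. Hence $\wtE_G[\sum_i x_{i,j}]$ is literally the same function of $G$ for every $j$, and a single Chebyshev bound covers all labels. Alternatively, as stated, the constraint $Ctd/\varepsilon\le\tau\le(\varepsilon/C)\log n$ forces $t=O(\log n)$. Together with $\varepsilon>C\log\log n/\log n$, this makes the per-label Chebyshev failure probability $(1+\tau)^2n^{-\Omega(\varepsilon)}$ smaller than $1/t$ for large $C$, which is the union bound the paper invokes.
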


\begin{remark}
One can enforce exact normalization by defining $\wtE_G^\star[p]=\wtE_G[p]/\wtE_G[1]$.
This preserves all hard zero-constraints (since those expectations are exactly $0$). Moreover,
Lemma~\ref{lem:concentration} implies that $\wtE_G^\star$ achieves objective value
$\wtE_G^\star[\sum_{j,i}x_{i,j}]=kt\cdot (1\pm n^{-\Omega(\varepsilon)})$.
\end{remark}

\subsection{PSD via moment matrix and ribbon factorization}\label{sec:sos_psd}

\subsubsection{Moment matrix and PSD criterion}

\begin{definition}[moment matrix]\label{def:moment_matrix}
Let $\wtE_G$ be a degree-$d$ pseudoexpectation candidate. The \emph{moment matrix} $M$ is indexed by subsets
$A,B\subseteq \cX=[n]\times[t]$ of size at most $\lfloor d/2\rfloor$:
\vskip-.5em
\[
M(A,B):=\wtE_G[X_A X_B],
\qquad
X_A:=\textstyle\prod_{(i,j)\in A} x_{i,j}.
\]
\end{definition}
\vskip-.5em
By Definition~\ref{def:pseudoexp}, $\wtE_G$ is PSD iff $M\succeq 0$.

\subsubsection{Ribbons: expansion of entries as low-degree Fourier sums}

We next express each entry $M(A,B)$ as a sum over small edge-sets (Fourier characters) that pass truncation.
Write $\supp(A)\subseteq[n]$ for the set of vertices appearing in $A$ (i.e., $\supp(A)=S_A$).

For notational clarity, for $A,B\subseteq \cX$ define the associated pinned-edge set
\vskip-1em
\[
\mathcal{E}_{A,B} := \mathcal{E}_A \cup \mathcal{E}_B
\qquad\text{where}\qquad
\mathcal{E}_A=\textstyle\bigcup_{r=1}^t \textstyle\binom{S_r(A)}{2}.
\]
\vskip-1pt
\noindent Given an edge-set $W\subseteq \binom{[n]}{2}$ define the test graph
\[
G_{A,B,W} := \bigl(V,\, W\cup \mathcal{E}_{A,B}\bigr),\qquad V:=V(W)\cup \supp(A)\cup\supp(B),
\]
and let $\Comp(A,B,W)$ denote the set of connected components of $G_{A,B,W}$.

\begin{definition}[$(U,V;K)$-ribbon]\label{def:uvk_ribbon}
For an edge set $F\subseteq\binom{[n]}{2}$, let $V(F)$ denote its set of endpoints.
Fix vertex sets $U,V\subseteq[n]$ and a pinned-edge set $K\subseteq\binom{[n]}{2}$.
A \emph{$(U,V;K)$-ribbon} is a graph
\vskip-.5em
\[
R=(V(R),E(R))
\quad\text{with}\quad
U\cup V\subseteq V(R),\ \ K\subseteq E(R),\ \ E(R)\subseteq \textstyle\binom{V(R)}{2},
\]
\vskip-1pt
\noindent such that every vertex of $R$ lies in $V(E(R))\cup U\cup V$.
We write $W_R:=E(R)\setminus K$ for the \emph{free} (non-pinned) edges of the ribbon.
\end{definition}

\begin{definition}[$(A,B)$-ribbon]\label{def:ab_ribbon}
For $A,B\subseteq \cX=[n]\times[t]$, an \emph{$(A,B)$-ribbon} is a
$(\supp(A),\supp(B);\mathcal{E}_{A,B})$-ribbon, where $\mathcal{E}_{A,B}:=\mathcal{E}_A\cup \mathcal{E}_B$ and
$\mathcal{E}_A=\bigcup_{r=1}^t \binom{S_r(A)}{2}$.
\end{definition}
Although ribbons are graphs on the vertex set $[n]$, the label assignments in
$A,B\subseteq[n]\times[t]$ enter through the pinned-edge set
$\mathcal{E}_{A,B}=\mathcal{E}_A\cup\mathcal{E}_B$ (which depends on the label-partitions $S_r(A),S_r(B)$)
and through the consistency condition for triples $(A,B,W)$ used in Lemma~\ref{lem:ribbon_expansion}.

\begin{lemma}\label{lem:ribbon_expansion}
For $A,B$ with $|A|,|B|\le \lfloor d/2\rfloor$,
$
M(A,B)=\wtE_G[X_A X_B]
=\sum_{W\subseteq\binom{[n]}{2}} \alpha_{A,B}(W)\,\chi_W(G),
$
where $\alpha_{A,B}(W)=0$ unless the $(A,B)$-ribbon $R$ induced by $W$ is truncation-feasible.
Moreover, whenever $(A,B,W)$ is consistent with the pinning constraints induced by $A,B$ and the edge-set $W$,
\[
|\alpha_{A,B}(W)|
\le
t^{\#\Comp(A,B,W)}
\left(\tfrac{k}{n-|V(R)|+1}\right)^{|V(R)|}
\le
t^{\#\Comp(A,B,W)}
\left(\tfrac{k}{n-\tau}\right)^{|V(R)|}
\le
\left(\tfrac{kt}{n-\tau}\right)^{|V(R)|}.
\]
\end{lemma}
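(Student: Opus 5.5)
The plan is to reduce the lemma to Definition~\ref{def:trunc} and Lemma~\ref{lem:coefbd} applied to the single monomial $X_AX_B$. By multilinearity, $X_AX_B=X_{A\cup B}$ with $M:=A\cup B\subseteq\cX$ and $|M|\le 2\lfloor d/2\rfloor\le d$. Definition~\ref{def:trunc} then gives
\[
M(A,B)=\wtE_G[X_M]=\sum_{W}\widehat{\wtE[X_M]}(W)\,\chi_W(G),
\]
so I set $\alpha_{A,B}(W):=\widehat{\wtE[X_{A\cup B}]}(W)$. This immediately yields the claimed expansion.

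The first step is to identify the test graphs. We have $S_M=\supp(A)\cup\supp(B)$, and for each label $r$, $S_r(M)=S_r(A)\cup S_r(B)$. Hence $\mathcal{E}_M\supseteq\mathcal{E}_{A,B}$, but in general $\mathcal{E}_M$ is strictly larger, since it also contains the cross pairs between $S_r(A)\setminus S_r(B)$ and $S_r(B)\setminus S_r(A)$. However, each such cross pair joins two vertices that lie in the same $\mathcal{E}_M$-clique. The components of $G_{M,W}$ are therefore obtained from those of $G_{A,B,W}$ by merging, for each $r$, the components meeting $S_r(M)$.

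I would handle this mismatch in one of two ways. The cleaner option is to interpret ``the $(A,B)$-ribbon induced by $W$ is truncation-feasible'' as exactly the truncation condition of Definition~\ref{def:trunc} for the pair $(A\cup B,W)$. In that case the vertex set is $V(R)=V(W)\cup\supp(A)\cup\supp(B)=V$, the condition reads $|V(R)|\le\tau$ with all components of size at most $k$, and the support claim is literally Definition~\ref{def:trunc}. For the component count, I would observe that $\#\Comp(G_{M,W})\le\#\Comp(A,B,W)$, because merging only decreases the count. This monotonicity is what lets the bound be stated with $\#\Comp(A,B,W)$.

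For the magnitude bound, the cases are as follows. If $\alpha_{A,B}(W)=0$ the bound is trivial. Otherwise $\alpha_{A,B}(W)=\E[\chi_W(G)X_{A\cup B}]$, and the pair $(A\cup B,W)$ is consistent; this is precisely the consistency of the triple $(A,B,W)$, since the triple's consistency is defined through the same label pinnings and the requirement that edges lie within a single label. Lemma~\ref{lem:coefbd}(2) then gives
\[
|\alpha_{A,B}(W)|\le t^{\#\Comp(G_{M,W})}\Bigl(\tfrac{k}{n-|V(R)|+1}\Bigr)^{|V(R)|}\le t^{\#\Comp(A,B,W)}\Bigl(\tfrac{k}{n-|V(R)|+1}\Bigr)^{|V(R)|},
\]
using $t\ge1$ and the monotonicity above. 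For the second inequality in the statement, I use truncation-feasibility, which gives $|V(R)|\le\tau$ and hence $n-|V(R)|+1\ge n-\tau$. For the third, each component contains at least one vertex, so $\#\Comp\le|V(R)|$ and $t^{\#\Comp}k^{|V(R)|}\le(kt)^{|V(R)|}$.

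The main obstacle is this bookkeeping step: ensuring that the component structure used in the bound (that of $G_{A,B,W}$ with pinned set $\mathcal{E}_A\cup\mathcal{E}_B$) is compatible with the one in Lemma~\ref{lem:coefbd} (built from $\mathcal{E}_{A\cup B}$). The merging argument resolves it, since the $t$-power only improves when components merge. The remaining steps are direct applications of Definition~\ref{def:trunc} and Lemma~\ref{lem:coefbd}.
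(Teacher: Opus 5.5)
Your proposal is correct and follows the paper's proof. The paper's proof is the one-line observation that $M(A,B)$ is Definition~\ref{def:trunc} applied to the monomial $X_AX_B=X_{A\cup B}$, combined with the coefficient bound of Lemma~\ref{lem:coefbd}. Your merging observation fills in a step the paper glosses over when it applies Lemma~\ref{lem:coefbd} directly ``to the test graph $G_{A,B,W}$'': because $\mathcal{E}_{A\cup B}\supseteq\mathcal{E}_A\cup\mathcal{E}_B$, the extra pinned edges only merge components, so $t^{\#\Comp(G_{A\cup B,W})}\le t^{\#\Comp(A,B,W)}$ and feasibility transfers.
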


\begin{proof}
This is the specialization of Definition~\ref{def:trunc} to the monomial $X_A X_B$ and Fourier character $\chi_W$,
together with the planted Fourier coefficient bound (Lemma~\ref{lem:coefbd}) applied to the test graph
$G_{A,B,W}=(V,W\cup \mathcal{E}_{A,B})$.
\end{proof}

\subsubsection{Separators and canonical factorization}

We now apply the ribbon separator framework to factor contributions by a minimum vertex separator.

\begin{definition}[vertex separators for $(A,B)$-ribbons]\label{def:sep_ab}
Let $R$ be an $(A,B)$-ribbon with vertex set $V(R)$ and edge set $W_R$. A set $Q\subseteq V(R)$ (not necessarily disjoint from $\supp(A)\cup\supp(B)$) is a \emph{separator}
if removing $Q$ disconnects $\supp(A)$ from $\supp(B)$ in the graph $(V(R),W_R\cup \mathcal{E}_{A,B})$.
Let $\sep(R)$ be the minimum size of a separator.\footnote{By convention, separators are allowed to intersect the endpoint sets $\supp(A)$ and $\supp(B)$.
In particular, $Q=\supp(A)$ is always a valid separator (removing $\supp(A)$ vacuously disconnects $\supp(A)$ from $\supp(B)$),
so $\sep(R)\le |\supp(A)|$ for every $(A,B)$-ribbon $R$.}
\end{definition}

We allow separators to intersect $\supp(A)\cup\supp(B)$.
With this convention, $\supp(A)$ and $\supp(B)$ are always valid separators, so for every $(A,B)$-ribbon $R$,
$
\sep(R)\le \min\{|\supp(A)|,|\supp(B)|\}.
$
This is convenient for the iterated factorization: when we iterate up to layer $m=\lfloor d/2\rfloor$ (the largest possible
index size in the moment matrix), there are no remaining ribbons with separator size $>m$.

\begin{lemma}[proof in~\ref{app:sos_separators}]\label{lem:left_right_sep_full}
Let $R$ be an $(A,B)$-ribbon. There exist unique minimum separators $S_L(R)$ and $S_R(R)$, called
the leftmost and rightmost minimum separators, respectively.
\end{lemma}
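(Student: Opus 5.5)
The plan is the standard submodularity / lattice argument for minimum vertex cuts. Fix the ribbon $R$ and write $H:=(V(R),W_R\cup\mathcal{E}_{A,B})$ for the graph on which separators are measured, with $s:=\sep(R)$. The set of minimum separators is finite and nonempty, since $\supp(A)$ is always a separator and so $s\le|\supp(A)|$. For a separator $Q$, let $L(Q)$ denote the set of vertices reachable from $\supp(A)\setminus Q$ in $H-Q$; we call this the ``left side'' of $Q$. We will define $S_L(R)$ as the minimum separator whose left side is inclusion-minimal, and $S_R(R)$ symmetrically using reachability from $\supp(B)$. Uniqueness will follow once we show that the family of minimum separators is closed under a ``meet'' operation and a ``join'' operation that are compatible with the ordering by left sides.

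The key step is an uncrossing lemma. For two minimum separators $Q_1,Q_2$ with left sides $L_1,L_2$, define
\[
Q_\wedge := (Q_1\cap \overline{L_2})\cup(Q_2\cap \overline{L_1})\cup(Q_1\cap Q_2)
\]
as the boundary of the smaller left region. More concretely, $Q_\wedge$ is the set of vertices of $Q_1\cup Q_2$ that are either adjacent to, or themselves in, $(\supp(A)\cup L_1)\cap(\supp(A)\cup L_2)$ while lying outside $L_1\cap L_2$. Define $Q_\vee$ analogously from the union of the left regions. We then show two things. First, both $Q_\wedge$ and $Q_\vee$ separate $\supp(A)$ from $\supp(B)$, because any $A$–$B$ path must exit the intersection (respectively, union) region through one of them. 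Second, $|Q_\wedge|+|Q_\vee|\le|Q_1|+|Q_2|$, by a vertex-counting argument: each vertex of $Q_1\cup Q_2$ is counted on the right at least as often as on the left. This is the vertex version of submodularity of the cut function, $|N(X\cap Y)|+|N(X\cup Y)|\le|N(X)|+|N(Y)|$, applied to the closed regions $X=\supp(A)\cup L_1$ and $Y=\supp(A)\cup L_2$. Since each of $Q_\wedge$ and $Q_\vee$ has size at least $s$, both must have size exactly $s$, so both are minimum separators.

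With uncrossing in hand, existence and uniqueness follow directly. Take a minimum separator whose left region is minimal under inclusion. If there were two such separators with distinct regions, their meet would be a minimum separator with a strictly smaller region (or the same region but a different set), contradicting minimality. A minimum separator is determined by its region, since it equals the set of vertices that are either in $\supp(A)$ and removed, or are outside the region but adjacent to it. Hence the minimizer is unique, and it is $S_L(R)$. The argument for $S_R(R)$ is symmetric: swap the roles of $A$ and $B$, or equivalently maximize the left region using the join.

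The main obstacle is the bookkeeping caused by the paper's convention that separators may intersect $\supp(A)\cup\supp(B)$. Because of this, a separator is not simply the neighborhood of a region. To handle it cleanly, I would first add auxiliary terminals: a super-source $a$ adjacent to all of $\supp(A)$ and a super-sink $b$ adjacent to all of $\supp(B)$. The separators in the paper's sense are then exactly the $a$–$b$ vertex cuts in $V(R)$. Each such cut $Q$ corresponds to the region $X_Q:=\{a\}\cup(\text{vertices reachable from } a \text{ in } H'-Q)$, with $Q=N(X_Q)$. The standard submodularity of $X\mapsto|N(X)|$ then applies verbatim, and the uniqueness argument goes through, with Menger's theorem available as a sanity check on the value of $s$.
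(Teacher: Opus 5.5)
Your proposal is correct and follows the same route as the paper, which simply invokes the standard submodularity/uncrossing argument for minimum vertex cuts (minimum separators are closed under intersection/union of reachable regions, so the extremal ones are unique). Your super-source/super-sink reduction is a clean way to make that sketch rigorous under the convention that separators may meet $\supp(A)\cup\supp(B)$. Two small points of precision, neither of which affects the argument: $Q=N(X_Q)$ holds for \emph{minimum} cuts, while for a general cut only $N(X_Q)\subseteq Q$; and the region of the meet is the super-source's component inside $X_{Q_1}\cap X_{Q_2}$, so it is contained in, but not necessarily equal to, the intersection.
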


\begin{definition}[canonical factorization]\label{def:canonical_factorization_full}
Let $R$ be an $(A,B)$-ribbon and let $S_L=S_L(R)$ and $S_R=S_R(R)$. Define $(R_\ell,R_m,R_r)$ as the \emph{canonical factorization} of $R$:
\begin{itemize}[leftmargin=2em,nosep]
\item $R_\ell$ as the induced subgraph on vertices reachable from $\supp(A)$ without passing through $S_L$,
together with $S_L$; it is an $(\supp(A),S_L)$-ribbon.
\item $R_r$ analogously from $\supp(B)$ to $S_R$; it is an $(S_R,\supp(B))$-ribbon.
\item $R_m$ as the induced subgraph on vertices between $S_L$ and $S_R$; it is an $(S_L,S_R)$-ribbon.
\end{itemize}
\end{definition}

\begin{claim}\label{clm:rib_add_full}
Given $(R_\ell,R_m,R_r)$,
$|V(R)|=|V(R_\ell)|+|V(R_m)|+|V(R_r)|-|S_L(R)|-|S_R(R)|.
$
\end{claim}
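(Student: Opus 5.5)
The plan is to prove the identity by inclusion--exclusion on vertex sets. I will show that the three pieces cover $V(R)$ and that their only overlaps are exactly $S_L$ and $S_R$. Write $S_L=S_L(R)$, $S_R=S_R(R)$, and let $H=(V(R),W_R\cup\mathcal{E}_{A,B})$ be the separator graph of Definition~\ref{def:sep_ab}. Define three vertex sets:
\begin{itemize}
\item $L$, the vertices reachable in $H$ from $\supp(A)$ without passing through $S_L$;
\item $Q$, the vertices reachable from $\supp(B)$ without passing through $S_R$;
\item $Mid$, everything else.
\end{itemize}
By Definition~\ref{def:canonical_factorization_full} we then have $V(R_\ell)=L\cup S_L$, $V(R_r)=Q\cup S_R$ and $V(R_m)=Mid\cup S_L\cup S_R$. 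Here I read ``between $S_L$ and $S_R$'' as the complement of $L\cup Q$ together with both separators. I would state this reading explicitly, since the claim depends on it.

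The key structural step is that $L$, $Q$, $S_L\cup S_R$ and $Mid$ cover $V(R)$, and that $L$ and $Q$ are disjoint from each other and from $S_L\cup S_R$.
\begin{itemize}
\item \emph{$L\cap S_L=\emptyset$:} this holds by definition, taking $L$ to exclude the separator itself.
\item \emph{$L\cap S_R=\emptyset$:} this uses the leftmost/rightmost property from Lemma~\ref{lem:left_right_sep_full}. Suppose $v\in S_R\cap L$. Minimality of $S_R$ gives a path from $\supp(A)$ to $\supp(B)$ that meets $S_R$ only at $v$. Its prefix from $\supp(A)$ to $v$ avoids $S_L$, so the rest of the path must meet $S_L$ after $v$. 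From this one builds a minimum separator strictly ``left'' of $S_L$, contradicting leftmostness. The cleanest route is the standard submodularity argument: $S_L$ is the minimum separator whose reachable region from $A$ is minimal.
\item \emph{$L\cap Q=\emptyset$:} a vertex in both would give a walk from $\supp(A)$ to $\supp(B)$ avoiding $S_L$, which is impossible.
\end{itemize}

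Given these facts, $L$, $Q$, $Mid$ and $S_L\cup S_R$ are pairwise disjoint. Counting gives
\[|V(R_\ell)|+|V(R_m)|+|V(R_r)| = |L|+|S_L|+|Mid|+|S_L\cup S_R|+|Q|+|S_R|.\]
Since $|V(R)|=|L|+|Q|+|Mid|+|S_L\cup S_R|$, the claimed identity reduces to $|S_L|+|S_R|$ being subtracted correctly. This works when $S_L\cap S_R=\emptyset$.

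The main obstacle is the case $S_L\cap S_R\neq\emptyset$, in particular $S_L=S_R$. Then $V(R_m)$ has size $|Mid|+|S_L\cup S_R|$, not $|Mid|+|S_L|+|S_R|$, and the formula is off by $|S_L\cap S_R|$. I would handle this by treating $V(R_m)$ as containing $S_L$ and $S_R$ as separately labeled boundary copies, as in the ribbon formalism of~\cite{BarakHKKMP19}, where a shared vertex is both a left and a right boundary vertex of $R_m$. Equivalently, I would count $|V(R_m)|$ with multiplicity on $S_L\cap S_R$. With that convention the count above gives exactly $|V(R)|=|V(R_\ell)|+|V(R_m)|+|V(R_r)|-|S_L|-|S_R|$, and I would note the convention explicitly in the proof.
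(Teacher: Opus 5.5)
Your inclusion--exclusion skeleton is the paper's one-line argument: every vertex lies in exactly one region, except those in $S_L$ (shared by $R_\ell$ and $R_m$) and those in $S_R$ (shared by $R_m$ and $R_r$). Your displayed count is also right. The error is in your last paragraph: the ``obstacle'' does not exist, and your proposed fix would make the claim false.

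Subtracting $|S_L|+|S_R|$ from your display leaves $|L|+|Q|+|Mid|+|S_L\cup S_R|=|V(R)|$, with no hypothesis on $S_L\cap S_R$. A vertex of $S_L\cap S_R$ lies in all three of $V(R_\ell)$, $V(R_m)$, $V(R_r)$. It is therefore counted three times and subtracted twice. In three-set inclusion--exclusion terms, the pairwise term $|V(R_\ell)\cap V(R_r)|=|S_L\cap S_R|$ cancels against the triple-intersection term, which is what you overlooked.

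If instead you count $V(R_m)$ with multiplicity on $S_L\cap S_R$, the same bookkeeping gives $|V(R_\ell)|+|V(R_m)|+|V(R_r)|-|S_L|-|S_R|=|V(R)|+|S_L\cap S_R|$. That is wrong by $|S_L|$ whenever $S_L=S_R$. So delete the sentence ``This works when $S_L\cap S_R=\emptyset$'' and the whole final paragraph, and treat all vertex sets as ordinary sets.

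There are two smaller slips in your disjointness step.
\begin{itemize}
\item \emph{$L\cap Q=\emptyset$.} A path inside $L$ followed by a path inside $Q$ avoids $S_L$ only if you also know $Q\cap S_L=\emptyset$, the mirror of your second bullet. Alternatively, use $L\cap S_R=\emptyset$ to get a walk from $\supp(A)$ to $\supp(B)$ that avoids $S_R$, contradicting that $S_R$ separates.
\item \emph{$L\cap S_R=\emptyset$.} Your path sketch is loose: the prefix of the path through $v$ supplied by minimality of $S_R$ need not avoid $S_L$. The clean route is the nesting property behind leftmost/rightmost. The region reachable from $\supp(A)$ avoiding $S_L$ is contained in the region reachable from $\supp(A)$ avoiding $S_R$, and the latter is disjoint from $S_R$.
\end{itemize}
The paper does not spell these facts out either, so stating them explicitly is an improvement once the counting paragraph is fixed.
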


\begin{proof}
Each vertex of $R$ lies in exactly one of the three regions, except vertices in $S_L$ (counted in $R_\ell$ and $R_m$)
and vertices in $S_R$ (counted in $R_m$ and $R_r$).
\end{proof}

\subsubsection{Matrix factorization: \texorpdfstring{$M = L Q L^\top - E$}{M = L Q L transpose - E}}

The key PSD strategy is to rewrite the ribbon sum so that the dominant term is a Gram matrix.
Let $\mathcal I$ be the set of all possible separators $S\subseteq[n]$ with $|S|\le \lfloor d/2\rfloor$.
We define:
\begin{itemize}[leftmargin=2em,nosep]
\item $L$ indexed by $A$ (rows) and $S\in\mathcal I$ (columns),
\item $Q$ indexed by $S,S'\in\mathcal I$.
\end{itemize}

Intuitively, $L(A,S)$ collects the contribution of left-ribbons from $\supp(A)$ to $S$, while $Q(S,S')$ collects the middle-ribbon
contribution between separators.

\begin{definition}[$L(A,S)$ and $Q(S,S')$]\label{def:LQ}
Define
\begin{align*}
L(A,S) &:= \textstyle \sum_{\substack{R:\,(\supp(A),S;\mathcal{E}_A)\text{-ribbon}\\ R\text{ feasible},\ \sep(R)=|S|}}
\beta_{A,S}(R)\,\chi_{W_R}(G),\ \text{and} \\
Q(S,S') &:=  \textstyle \sum_{\substack{R:\,(S,S';\emptyset)\text{-ribbon}\\ R\text{ feasible},\ \sep(R)=\min(|S|,|S'|)}}
\gamma_{S,S'}(R)\,\chi_{W_R}(G).
\end{align*}
where the coefficients $\beta,\gamma$ are the planted-calibrated Fourier coefficients arising from the truncation rule
(Definition~\ref{def:trunc}) specialized to these endpoint sets.
\end{definition}

Formally, the coefficients $\beta_{A,S}(R)$ and $\gamma_{S,S'}(R)$ are obtained by taking the Fourier coefficient
$\alpha_{A,B}(W_R)$ from Lemma~\ref{lem:ribbon_expansion} for the appropriate endpoint sets and then restricting to ribbons
whose canonical minimum separator is exactly the designated separator set (to avoid overcounting).

The restriction ``$\sep(R)=|S|$'' (resp. the analogous condition for $Q$) is the standard device that prevents overcounting
and ensures the separating factorization is well-defined.

\begin{lemma}[proof in~\ref{app:sos_factorization}]\label{lem:sep_factor}
There is a decomposition
$
M = L Q L^\top - E_1,
$
where the error matrix $E_1$ is supported on ribbons $R$ whose canonical separators satisfy $S_L(R)\neq S_R(R)$
(i.e., the factorization introduces overlaps due to non-disjoint separator regions).\footnote{The treatment of overlap terms with $S_L(R)\neq S_R(R)$ follows the iterated separator-layer factorization of~\cite[Section~6.2]{BarakHKKMP19}.
Although our ribbons include pinned edges $\mathcal E_{A,B}$ encoding labels, separators and canonical regions are defined purely via vertex cuts in the underlying graph, so the promotion argument carries over unchanged.}

\end{lemma}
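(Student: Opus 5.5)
We prove Lemma~\ref{lem:sep_factor} by expanding every entry of $L Q L^\top$ as a sum of ribbon products and matching these products against the ribbon expansion of $M$ given by Lemma~\ref{lem:ribbon_expansion}. The argument follows \cite[Section~6]{BarakHKKMP19}.

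\textbf{Step 1: expand $LQL^\top$.} By Definition~\ref{def:LQ}, for fixed $A,B$ we have
\[
(LQL^\top)(A,B)=\sum_{S,S'\in\mathcal I}\ \sum_{R_\ell,R_m,R_r}\beta_{A,S}(R_\ell)\,\gamma_{S,S'}(R_m)\,\beta_{B,S'}(R_r)\,\chi_{W_{R_\ell}}\chi_{W_{R_m}}\chi_{W_{R_r}}.
\]
Here $R_\ell$ is a feasible $(\supp(A),S)$-ribbon, $R_m$ is a feasible $(S,S')$-ribbon, and $R_r$ is a feasible $(S',\supp(B))$-ribbon, each with the stated minimum-separator condition.

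Call a triple \emph{properly composable} if the three ribbons satisfy two conditions:
\begin{itemize}[leftmargin=2em,nosep]
\item[(a)] they intersect only in $S$ and $S'$, and their free edge sets are disjoint;
\item[(b)] in the glued graph $R=R_\ell\cup R_m\cup R_r$, the set $S$ is the leftmost minimum separator and $S'$ is the rightmost one.
\end{itemize}
For a properly composable triple the characters multiply to $\chi_{W_R}$.

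\textbf{Step 2: bijection with ribbons of $M$.} Definition~\ref{def:canonical_factorization_full}, together with the uniqueness in Lemma~\ref{lem:left_right_sep_full}, sends each feasible $(A,B)$-ribbon $R$ to exactly one properly composable triple, namely $(S_L(R),S_R(R))$ with its three pieces. Conversely, gluing a properly composable triple recovers $R$. So this is a bijection.

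We next check that the coefficients agree, $\alpha_{A,B}(W_R)=\beta\gamma\beta$. By the proof of Lemma~\ref{lem:coefbd}, each coefficient is a probability of the form
\[
\frac{\sum_\sigma\prod_r (k)_{s_r}}{(n)_{|V|}}.
\]
We have to show that this factors over the three pieces, with the vertex counts combining via Claim~\ref{clm:rib_add_full}. Exact multiplicativity can fail because of the falling factorials and because label assignments on separator vertices are shared between pieces. The standard remedy is to \emph{define} $\beta,\gamma$ as normalized products: for example, divide the middle coefficient by the $(k/n)^{|S|}$ weight of the separator and sum over separator labelings. The discrepancy is then pushed into a further error term, which is an explicit small matrix. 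Since the lemma only asserts a decomposition with some $E_1$, we put every discrepancy into $E_1$. Label consistency (pinned edges $\mathcal E_{A,B}$ and disjointness) enters only through the coefficient, which vanishes on inconsistent triples by Lemma~\ref{lem:coefbd}(1). The separator structure itself is purely graph-theoretic, so it is unaffected.

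\textbf{Step 3: define $E_1$ and locate its support.} Set $E_1:=LQL^\top-M$. By Step 2 the properly composable terms cancel against $M$. The remaining terms are triples that violate (a) or (b): the pieces overlap outside $S\cup S'$, or gluing creates a new smaller or further-left separator. We show that every such term corresponds to a glued ribbon whose canonical separators are not the designated pair. The key case is the one where gluing forces $S_L(R)\neq S_R(R)$ to be reorganized, and this is exactly the intersection-term bookkeeping handled by the iterated promotion argument cited in the footnote.

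\textbf{Main obstacle.} The hard part is the overlap and intersection terms in Step 3. The plan there is to classify each non-proper triple by the minimum separator of its glued graph, and then use the leftmost/rightmost uniqueness (Lemma~\ref{lem:left_right_sep_full}) to show that the minimum separator is not $S$ or $S'$. This places the term in the support described for $E_1$.
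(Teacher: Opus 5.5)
Your outline is the paper's: group ribbons by $S_L(R),S_R(R)$, factor canonically, identify the factorizing part with $LQL^\top$, and send overlap terms to $E_1$. The argument, however, does not establish the statement. Once you define $E_1:=LQL^\top-M$, the identity is a tautology, so the entire content of the lemma is the support condition on $E_1$. Your Step~2 remedy puts every discrepancy $\beta\gamma\beta-\alpha_{A,B}(W_R)$ into $E_1$. Those terms sit on properly composable ribbons, whose canonical separators are the designated $S,S'$, and nothing forces $S\neq S'$. The remedy also contradicts your Step~3 assertion that properly composable terms cancel against $M$.

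More seriously, your claim that label consistency is handled because Lemma~\ref{lem:coefbd}(1) makes the coefficient vanish on inconsistent configurations is false for the factorized product. That lemma kills $\alpha_{A,B}$, not $\beta\gamma\beta$, and the columns of $L$ are unlabeled sets $S\subseteq[n]$. Concretely, the rows of $L$ indexed by $A=\{(1,1)\}$ and $B=\{(1,2)\}$ coincide: same support $\{1\}$, no pinned edges, and coefficients invariant under permuting labels in the planted model. Hence $(LQL^\top)(A,A)=(LQL^\top)(A,B)$. But $M(A,A)=\wtE_G[x_{1,1}]$ has constant Fourier coefficient $k/n$, while $M(A,B)=0$ by Lemma~\ref{lem:disjoint}. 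So $E_1(A,A)$ or $E_1(A,B)$ has a nonzero $\chi_\emptyset$-coefficient, and the corresponding edgeless ribbon on $\{1\}$ has $S_L=S_R=\{1\}$. No label-symmetric renormalization of $\beta,\gamma$, including the one you suggest, can avoid this.

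Step~3 does not deliver the support claim for genuine overlap terms either. Overlapping pieces multiply to $\chi_{W_{R_\ell}\triangle W_{R_m}\triangle W_{R_r}}$, so the term lives on the ribbon with the symmetric-difference edge set, not on the glued graph. For instance, take $A=\{(1,1)\}$, $B=\{(2,2)\}$, $S=S'=\emptyset$, and place the free edge $\{1,3\}$ in both $R_\ell$ and $R_r$. This triple violates your condition (a), yet its glued graph still has minimum separator $\emptyset=S$. Since $\chi_{\{1,3\}}^2=1$, the term sits on the edgeless ribbon on $\{1,2\}$, for which $S_L=S_R=\emptyset$. So your plan of showing that the glued graph's minimum separator differs from $S,S'$ fails here, and in any case it would not yield $S_L(R)\neq S_R(R)$ for the ribbon that actually carries the term. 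Deferring this case to the footnote's promotion argument leaves it unproved.

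For comparison, the paper's proof never confronts the coefficient issue you correctly flag in Step~2. It simply asserts that the factorizing contribution matches $LQL^\top$, and the label example above shows this cannot hold with unlabeled separator columns. So following the paper more closely will not close the gap; you would need a genuinely new ingredient. One option is separators that carry labels, together with weights that multiply exactly under Claim~\ref{clm:rib_add_full}. Another is a weakened lemma in which $E_1$ is controlled in norm rather than by its support.
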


\noindent The matrix $Q$ has the same ``middle-ribbon'' moment-matrix form as in prior work; we prove it is PSD with high probability.

\begin{lemma}[proof in~\ref{app:sos_factorization}]\label{lem:Q_psd}
W.h.p.\ over $G\sim G(n,1/2)$, the matrix $Q$ is positive semidefinite.
\end{lemma}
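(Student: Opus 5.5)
We would follow the standard route of \cite{BarakHKKMP19}: split $Q$ into its diagonal (trivial-ribbon) part and its off-diagonal ribbon part, and show that the diagonal part dominates in spectral norm. Write
\[
Q = D + \sum_{\text{shapes } \alpha\neq\text{trivial}} Q_\alpha ,
\]
where $D(S,S)=\gamma_{S,S}(R_S)$ collects the ribbons $R_S$ with $V(R)=S$ and $W_R=\emptyset$. Each $Q_\alpha$ collects the feasible $(S,S';\emptyset)$-ribbons of a fixed shape $\alpha$, meaning a fixed isomorphism type of the graph with its distinguished endpoint sets and vertex count at most $\tau$. The approach has three steps.

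\textbf{Step 1: lower-bound the diagonal.} By calibration (Lemma~\ref{lem:calibration_sos} together with the exact planted formula in the proof of Lemma~\ref{lem:coefbd}), the trivial coefficient for $S$ is the probability that all of $S$ lies inside planted cliques. This quantity is of order $(kt/n)^{|S|}$ up to $(1-o(1))$ factors, since $|S|\le d\ll k$. So $D\succeq (1-o(1))\,\mathrm{diag}\bigl((kt/n)^{|S|}\bigr)$. We then conjugate by $D^{-1/2}$. It suffices to show that
\[
\Bigl\|D^{-1/2}\textstyle\sum_\alpha Q_\alpha D^{-1/2}\Bigr\| = o(1)
\]
w.h.p.; equivalently, blockwise on $(|S|,|S'|)$ it suffices to show $\|Q_\alpha\|\le o(1)\cdot (kt/n)^{(|S|+|S'|)/2}$ after summing over shapes.

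\textbf{Step 2: bound each shape via graph-matrix norms.} For a shape $\alpha$ with $v=|V(\alpha)|$ vertices and minimum separator size $s_\alpha=\sep=\min(|S|,|S'|)$, we apply the graph-matrix norm bound of \cite{BarakHKKMP19} (also Ahn--Medarametla--Potechin). It gives, w.h.p. simultaneously for all shapes with $v\le\tau$,
\[
\|Q_\alpha\|\le \max_{\alpha}\bigl|\gamma\bigr|\cdot n^{(v-s_\alpha)/2}\,(\tau\log n)^{O(v)} .
\]
This is where the separator restriction in Definition~\ref{def:LQ} matters: $s_\alpha$ is exactly the minimum vertex separator. Combining with Lemma~\ref{lem:ribbon_expansion}, $|\gamma|\le (kt/(n-\tau))^{v}$. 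After normalization, the contribution of shape $\alpha$ is at most
\[
\Bigl(\tfrac{kt}{n}\Bigr)^{v-(|S|+|S'|)/2} n^{(v-s_\alpha)/2}(\tau\log n)^{O(v)} .
\]
Using $kt=n^{1/2-\varepsilon}$ and $v-(|S|+|S'|)/2\ge (v-s_\alpha)/2$, this is at most
\[
n^{-\varepsilon (v-s_\alpha)}\, n^{-(1/2-\varepsilon)\,(\text{excess})}(\tau\log n)^{O(v)} .
\]
Every nontrivial middle shape has either $v>s_\alpha$ or $|S|+|S'|>2s_\alpha$, so there is a positive gap in the exponent. The label multiplicity $t^{\#\Comp}$ is already absorbed into $(kt)^v$. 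The one extra wrinkle is that the disjointness constraints make some coefficients vanish (Lemma~\ref{lem:coefbd}(1)); this only helps.

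\textbf{Step 3: sum over shapes.} The number of shapes with $v$ vertices is at most $v^{O(v)}2^{v^2}\le 2^{O(\tau^2)}$. The parameter assumptions~\eqref{eq:param_assumptions}, namely $\tau\le(\varepsilon/C)\log n$ and $\varepsilon\gg\log\log n/\log n$, make $n^{-\varepsilon}$ beat $(\tau\log n)^{O(1)}$ per vertex. We plan to charge the $2^{v^2}$ edge count to edges of the shape instead, each free edge costing nothing in norm but coming with its own choice; this has to be balanced more carefully, and the fallback is to enumerate per excess vertex. The total off-diagonal mass is then $o(1)$ relative to $D$, which gives $Q\succeq 0$ w.h.p.

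\textbf{Main obstacle.} The hard step is Step 2's exponent bookkeeping for shapes where $v=s_\alpha$ but $|S|\ne|S'|$, or where the middle ribbon has isolated vertices of $S\cap S'$. Here the decay must come from the $(kt/n)$ factor on non-separator endpoints, not from $n^{-\varepsilon}$. We would handle this as in \cite{BarakHKKMP19}: shapes whose entire vertex set is a single minimum separator, i.e.\ with $S=S'$ and nonempty $W$, are treated as the diagonal plus a small perturbation $\chi_W$ bounded entrywise. Row sums (Gershgorin) would then finish that case.
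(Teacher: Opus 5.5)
Your plan follows the route the paper's sketch takes: split $Q$ into diagonal and off-diagonal parts, bound the off-diagonal part by trace-power/graph-matrix norms using the coefficient input $(kt/(n-\tau))^{|V(R)|}$, and defer the remaining bookkeeping to \cite{BarakHKKMP19}. The explicit bookkeeping you supply, however, fails in Step~2. With $kt/n=n^{-1/2-\varepsilon}$, the exponent of $n$ in your normalized bound $(kt/n)^{v-(|S|+|S'|)/2}n^{(v-s_\alpha)/2}$ is exactly $-\varepsilon(v-s_\alpha)+(\tfrac12+\varepsilon)\bigl(\tfrac{|S|+|S'|}{2}-s_\alpha\bigr)$. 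The inequality $v-(|S|+|S'|)/2\ge (v-s_\alpha)/2$ that you invoke only yields $n^{(1/4-\varepsilon/2)(v-s_\alpha)}$, which grows. So the case $|S|+|S'|>2s_\alpha$, which you count as a ``positive gap'', actually costs a factor $n^{(1/2+\varepsilon)(\frac{|S|+|S'|}{2}-s_\alpha)}$, and this swamps the $n^{-\varepsilon(v-s_\alpha)}$ gain on small shapes.

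Such ribbons do lie in $Q$ under the condition $\sep(R)=\min(|S|,|S'|)$. For $S=\{a\}$, $S'=\{a,b\}$, $W=\emptyset$, your own formula gives $\sqrt{kt}\to\infty$. Indeed, let $B$ be the singleton--pair incidence matrix, $p_1=kt/n$ and $p_2\approx p_1^2$. The $W=\emptyset$ part of the size-$1$/size-$2$ blocks is $\left(\begin{smallmatrix} p_1I & p_2B\\ p_2B^\top & p_2I\end{smallmatrix}\right)$. Its Schur complement $p_2\bigl(I-\tfrac{p_2}{p_1}B^\top B\bigr)$ has a negative eigenvalue once $kt\ge 2$, since $\|B^\top B\|=2n-2$. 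So diagonal dominance is false for this $Q$, not merely unproven. The missing idea is the convention of \cite{BarakHKKMP19}: a middle ribbon must have \emph{both} $S$ and $S'$ as minimum separators. Then $Q$ is block diagonal in $|S|$, unequal-size interactions are pushed into $L$, and only then does your exponent become $-\varepsilon(|V(R)|-s)$.

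Even after that fix, your treatment of the non-decaying shapes $V(R)=S=S'$ with $W\neq\emptyset$ does not work, because they are not a small perturbation. Let $c(W)$ be the number of components of $(S,W)$. Normalized by $D(S,S)$, such a $W$ contributes about $t^{c(W)-|S|}\chi_W(G)$. The standing assumption $Ctd/\varepsilon\le\tau\le(\varepsilon/C)\log n$ forces $t\le \varepsilon^2\log n/(C^2 d)$, so $t=1$ is allowed. For $t=1$ these terms, together with the trivial one, sum to exactly $2^{\binom{|S|}{2}}\mathbf 1[S\text{ is a clique in }G]$, which is $0$ for most $S$, leaving no slack for Gershgorin. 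They must instead be summed exactly into the nonnegative quantity $\E_X\bigl[\mathbf 1\{S\subseteq\bigcup_r C_r\}\prod_{e\in F_X\cap\binom S2}(1+G_e)\bigr]$. The rows on which this vanishes must then be shown to vanish identically, as the clique structure guarantees for $t=1$.

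The same phenomenon is what your Step~3 leaves open. Edge patterns inside $S\cap S'$ carry no $n^{-\varepsilon}$ factor. Their number, $2^{\Theta(s^2)}$ with $s$ up to $d/2$, is not paid for by a single $n^{-\varepsilon}$ once $d\gg\sqrt{\varepsilon\log n}$, and that range includes $d=\Theta(\varepsilon^2\log n)$ for constant $\varepsilon$. So these patterns too must be absorbed by exact summation rather than charged to decay. (Minor: the case ``$v=s_\alpha$ but $|S|\neq|S'|$'' in your last paragraph is empty, since $v=s_\alpha$ forces $S=S'=V(R)$.)
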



Each matrix $Q^{(s)}$ appearing in Lemma~\ref{lem:iter_sep} has the same ribbon-sum structure as $Q$ above, hence the same argument implies $Q^{(s)}\succeq 0$ with high probability. Since $s\le \lfloor d/2\rfloor=O(\log n)$, a union bound over all relevant $s$ preserves the ``with high probability'' guarantee.

\subsubsection{Iterated factorization and concluding PSD}

We iterate the one-step separating factorization to push contributions into increasing separator size.

\begin{lemma}[proof in~\ref{app:sos_factorization}]\label{lem:iter_sep}
There exist matrices $L^{(0)}:=L,L^{(1)},\dots,L^{(m)}$ and PSD matrices $Q^{(0)}:=Q,Q^{(1)},\dots,Q^{(m)}$ such that
$
M = \sum_{s=0}^m L^{(s)} Q^{(s)} (L^{(s)})^\top - E_{m+1},
$
where $E_{m+1}$ is defined to be the total contribution of truncation-feasible ribbons $R$ with $\sep(R)\ge m+1$ (equivalently, those not captured in the first $m+1$ separator layers).
In particular, $E_{m+1}$ is \emph{not} an ``overlap artifact'' remainder: the non-disjointness/overlap terms introduced by canonical factorization are absorbed into the successive $L^{(s)}Q^{(s)}(L^{(s)})^\top$ layers during the iteration.
\end{lemma}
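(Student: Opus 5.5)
We prove Lemma~\ref{lem:iter_sep} by induction on the separator layer $s$, following the iterated separator-layer factorization of~\cite[Section~6.2]{BarakHKKMP19}. The first step is to use Lemma~\ref{lem:ribbon_expansion} to write $M=\sum_{R}\alpha_{A,B}(W_R)\chi_{W_R}(G)$ as a sum over truncation-feasible $(A,B)$-ribbons $R$. We then partition this sum by $\sep(R)\in\{0,1,\dots,\lfloor d/2\rfloor\}$ and let $M_s$ denote the part of the sum coming from ribbons with $\sep(R)=s$.

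For the layer-$s$ term, define $L^{(s)}(A,S)$ exactly as $L$ in Definition~\ref{def:LQ}, but restricted to $|S|=s$ and to left ribbons whose leftmost minimum separator is exactly $S$. Define $Q^{(s)}(S,S')$ as the middle-ribbon sum with $|S|=|S'|=s$ and $\sep=s$. By Lemma~\ref{lem:left_right_sep_full} the separators $S_L(R)$ and $S_R(R)$ are unique, so the canonical factorization (Definition~\ref{def:canonical_factorization_full}) gives a well-defined map $R\mapsto(R_\ell,R_m,R_r)$. Expanding $L^{(s)}Q^{(s)}(L^{(s)})^\top$ then produces every ribbon of $M_s$ exactly once, because the coefficients factor. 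This factorization follows from the product form of the planted probabilities in Lemma~\ref{lem:coefbd}: the counts $(k)_{s_r}/(n)_{|V|}$ combine across $R_\ell,R_m,R_r$ through Claim~\ref{clm:rib_add_full}, up to the normalization on the shared separators, which we absorb into $Q^{(s)}$.

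The product also produces spurious terms, namely triples $(R_\ell,R_m,R_r)$ whose vertex sets intersect outside $S_L,S_R$, or whose gluing $R$ has a smaller separator than $s$ or a larger one. The intersection terms are the overlap error $E_1$ of Lemma~\ref{lem:sep_factor}. The key observation is that after collapsing the intersected vertices, each such overlap term is itself a feasible ribbon-type term: its minimum separator is at least $s$, and if it equals $s$ it can be charged back into layer $s$ through a correction to $Q^{(s)}$. Otherwise we promote it to a strictly larger separator size, where it is absorbed into the definition of $L^{(s')}$ and $Q^{(s')}$ for some $s'>s$. Proceeding inductively for $s=0,1,\dots,m$, this yields $M=\sum_{s\le m}L^{(s)}Q^{(s)}(L^{(s)})^\top-E_{m+1}$, where $E_{m+1}$ collects only terms with $\sep\ge m+1$. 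Positive semidefiniteness of each $Q^{(s)}$ is the statement after Lemma~\ref{lem:Q_psd}, and a union bound over $s\le\lfloor d/2\rfloor$ keeps it with high probability. The labels play no role in the combinatorics, since separators are pure vertex cuts in $(V(R),W_R\cup\mathcal E_{A,B})$. Inconsistent configurations contribute zero by Lemma~\ref{lem:coefbd}(1).

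The main obstacle is the overlap bookkeeping. We must show that collapsing overlaps never decreases the separator size, so that promotion only moves terms upward, and that the collapsed coefficients still satisfy the truncation conditions $|V|\le\tau$ and component size $\le k$, so that they are legitimate entries of the later layers. To handle this, we would first prove monotonicity: any $Q$ separating the collapsed graph lifts to a separator of the glued ribbon of no larger size. Truncation-feasibility survives because collapsing only shrinks $|V|$ and merges components within a single label. If component sizes could exceed $k$ after merging, those terms vanish by Lemma~\ref{lem:coefbd}, so dropping them is harmless.
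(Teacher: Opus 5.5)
Your outline takes the same route as the paper's proof of this lemma. That proof is only the instruction to iterate Lemma~\ref{lem:sep_factor}, peel off the layer of separator size $s$, and call what is left $E_{m+1}$. The details you add to make that iteration exact do not hold, and the paper does not supply them either.

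First, your claim that $L^{(s)}Q^{(s)}(L^{(s)})^\top$ reproduces every ribbon of $M_s$ with its correct coefficient fails. Claim~\ref{clm:rib_add_full} makes a pure power $\rho^{|V(R)|}$ multiplicative up to separator normalization. Here, however, the calibrated coefficient is $\sum_\sigma\prod_r (k)_{s_r}/(n)_{|V|}$, where $\sigma$ ranges over label assignments to the components of the \emph{whole} glued graph. Moreover, with separators indexed by unlabeled sets $S\subseteq[n]$, no label information can pass through the separator.

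Concretely, a left ribbon from $A=\{(i,j)\}$ has $\supp(A)=\{i\}$ and $\mathcal E_A=\emptyset$, and its calibrated coefficient is invariant under permuting labels. So $L^{(s)}((i,j),S)$ does not depend on $j$. Hence, for $t\ge 2$ and any $Q^{(s)}$ indexed by unlabeled sets (with $L^{(s)}$ built, like yours, from label-symmetric coefficients), $\sum_s L^{(s)}Q^{(s)}(L^{(s)})^\top$ has all entries equal on the $t\times t$ block indexed by $(i,1),\dots,(i,t)$. But $M$ is diagonal on that block. For $j\neq j'$ we have $\wtE_G[x_{i,j}x_{i,j'}]=0$ by Lemma~\ref{lem:disjoint}, while $\wtE_G[x_{i,j}]$ has constant Fourier coefficient $k/n$. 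Every $(A,B)$-ribbon for these indices has $\sep\le 1$, so $E_{m+1}$ vanishes on this block once $m\ge 1$. The identity therefore fails however the overlap terms are reassigned.

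Relatedly, Lemma~\ref{lem:coefbd}(1) kills inconsistent configurations in $M$, not in the product. A product term's coefficient is the product of the piece coefficients, not the planted coefficient of the glued configuration. The same objection applies to your plan to drop glued terms whose components exceed $k$. To repair this you would need separators carrying labels, i.e.\ $L,Q$ indexed by subsets of $[n]\times[t]$. Even then you would need either a calibration to a planted model with independent labels (where each labeled configuration has coefficient exactly $(k/n)^{|V|}$) or an explicit error term for the non-multiplicative falling factorials $(k)_{s_r}/(n)_{|V|}$.

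Second, and more fundamentally, positivity is not established. For $m=\lfloor d/2\rfloor$ the remainder $E_{m+1}$ is empty (the argument of Lemma~\ref{lem:psd_core}), so the statement alone already implies $M\succeq 0$. A regrouping of terms cannot deliver that, and all the weight sits on the PSD-ness of the $Q^{(s)}$. Overlap terms occur in $LQL^\top$ but not in $M$, so they must be \emph{subtracted}. Charging them into $Q^{(s)}$, or promoting them into $Q^{(s')}$, turns these matrices into a calibrated middle-ribbon sum minus intersection terms. Lemma~\ref{lem:Q_psd} and the remark after it do not apply to such matrices. In the single-clique analysis of~\cite{BarakHKKMP19} this is exactly where the real work lies. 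There one obtains only an approximate factorization, whose middle matrix is $Q_0$ corrected by intersection terms of alternating sign, plus a truncation error. Positivity then comes from spectral-norm bounds showing that $Q_0$ dominates, using the extra factor $\rho$ that each identified vertex costs.

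Your promotion step also has holes of its own:
\begin{itemize}
\item A separator of the collapsed graph that passes through an identified vertex lifts to one containing both preimages. The lift can therefore be one larger, so your lifting argument does not give monotonicity.
\item Edges shared by the pieces cancel in $\chi_{W_{R_\ell}}\chi_{W_{R_m}}\chi_{W_{R_r}}=\chi_{W_{R_\ell}\triangle W_{R_m}\triangle W_{R_r}}$, which can only lower connectivity.
\item Gluing three $\tau$-feasible pieces can produce up to about $3\tau$ vertices. So $LQL^\top$ contains truncation-infeasible ribbons that are absent from $M$, and these must be bounded rather than discarded.
\end{itemize}
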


The moment matrix always admits the decomposition
$
M = M^{(\mathrm{diag})} + M^{(\mathrm{off})},
$
where $M^{(\mathrm{diag})}$ is the diagonal matrix with entries $M^{(\mathrm{diag})}(A,A)=\wtE_G[X_A]$ and $M^{(\mathrm{off})}:=M-M^{(\mathrm{diag})}$ collects all off-diagonal entries.
So once Lemma~\ref{lem:psd_core} is established (so that $\wtE_G$ is a valid pseudoexpectation), the diagonal entries satisfy
$\wtE_G[X_A]=\wtE_G[X_A^2]\ge 0$ in the Boolean quotient, hence $M^{(\mathrm{diag})}\succeq 0$.

\begin{lemma}[proof in~\ref{app:sos_factorization}]\label{lem:psd_core}
Under the parameter setting of Section~\ref{sec:sos_construction}, with high probability over $G\sim G(n,1/2)$,
the moment matrix $M$ associated with $\wtE_G$ is positive semidefinite:
$
M \succeq 0.
$
\end{lemma}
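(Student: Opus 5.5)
The plan is to combine the iterated separator factorization of Lemma~\ref{lem:iter_sep} with the observation that the terminal remainder is empty. Set $m:=\lfloor d/2\rfloor$. Lemma~\ref{lem:iter_sep} gives
\[
M=\sum_{s=0}^{m} L^{(s)}Q^{(s)}(L^{(s)})^\top - E_{m+1},
\]
where $E_{m+1}$ collects exactly the truncation-feasible $(A,B)$-ribbons with $\sep(R)\ge m+1$. By the convention in Definition~\ref{def:sep_ab}, $\supp(A)$ is always a separator. Hence every $(A,B)$-ribbon indexing an entry of $M$ satisfies
\[
\sep(R)\le\min\{|\supp(A)|,|\supp(B)|\}\le \min\{|A|,|B|\}\le m .
\]
No ribbon has separator size exceeding $m$, so $E_{m+1}=0$ identically, as a matrix of polynomials in $G$ and not merely with high probability.

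It then remains to show that each layer $L^{(s)}Q^{(s)}(L^{(s)})^\top$ is PSD. This is a congruence of $Q^{(s)}$, so it suffices that each $Q^{(s)}\succeq 0$. For $s=0$ this is Lemma~\ref{lem:Q_psd}. For $s\ge1$ the remark following Lemma~\ref{lem:Q_psd} applies: $Q^{(s)}$ has the same middle-ribbon structure, so the same argument gives $Q^{(s)}\succeq0$ with probability $1-n^{-\omega(1)}$ (or at least $1-o(1/\log n)$). A union bound over the $m+1=O(\log n)$ layers, using $d\le c_1\log n$ from the parameter assumptions~\eqref{eq:param_assumptions}, shows that all $Q^{(s)}$ are simultaneously PSD with high probability. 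A sum of PSD matrices is PSD, so $M\succeq0$ on that event.

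The main obstacle I expect is making the ``same argument'' for $Q^{(s)}$ rigorous, namely the norm bounds underlying Lemma~\ref{lem:Q_psd}. The diagonal (trivial-ribbon) part of $Q^{(s)}$ has entries of order $(kt/n)^{|S|}$, by Lemma~\ref{lem:ribbon_expansion} and Lemma~\ref{lem:coefbd}. The nontrivial middle ribbons with minimum separator $|S|$ must be dominated by this diagonal. To do so, I would use graph-matrix norm bounds: a shape on $v$ vertices with minimum separator $s$ has norm roughly $n^{(v-s)/2}$ up to $\tau^{O(\tau)}$ factors, while its coefficients are at most $(kt/n)^{v}$. After normalizing by the diagonal this gives a factor $(kt/\sqrt n)^{v-s}=n^{-\varepsilon(v-s)}$.

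The number of shapes with at most $\tau$ vertices is $\tau^{O(\tau)}$, and the extra $t^{\#\mathrm{Comp}}$ label multiplicities are already absorbed into the $(kt)^v$ bound. These are killed by the assumptions $\tau\le(\varepsilon/C)\log n$ and $\varepsilon\gg\log\log n/\log n$. I would first verify this charging for $Q$ itself and then note that the layer construction only changes which separator is designated. It does not change the coefficient bounds or the vertex-count accounting of Claim~\ref{clm:rib_add_full}, so the verification transfers to every $Q^{(s)}$.
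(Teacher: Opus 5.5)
Your proposal is correct and is essentially the paper's proof. You apply Lemma~\ref{lem:iter_sep} with $m=\lfloor d/2\rfloor$, observe that $E_{m+1}=0$ identically because $\supp(A)$ is always a separator (so $\sep(R)\le|\supp(A)|\le|A|\le m$), and conclude that $M$ is a sum of the PSD congruences $L^{(s)}Q^{(s)}(L^{(s)})^\top$; your union bound over the $O(\log n)$ layers and the graph-matrix norm heuristic for $Q^{(s)}\succeq 0$ only spell out what the paper delegates to Lemma~\ref{lem:Q_psd} and the remark following it.
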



\subsection{SoS lower bound}

\begin{theorem}[SoS lower bound]\label{thm:sos_formal}
There exists a constant $c_0>0$ such that if
$
kt \le n^{1/2 - c_0\sqrt{d/\log n}},
$
and $d$ lies in the range where the truncation parameter $\tau$ can be chosen as in Section~\ref{sec:sos_construction}, then with high probability over $G\sim G(n,1/2)$ the normalized functional $\wtE_G^\star[p]:=\wtE_G[p]/\wtE_G[1]$ is a degree-$d$ pseudoexpectation satisfying the hard constraints in Section~\ref{sec:sos_program} and achieving
\vskip-.7em
\[
\wtE_G^\star\Big[\textstyle\sum_{j=1}^t\textstyle\sum_{i=1}^n x_{i,j}\Big]\;\ge\; kt(1-o(1)).
\]
\end{theorem}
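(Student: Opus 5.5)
The theorem assembles the lemmas of Sections~\ref{sec:sos_construction}--\ref{sec:sos_psd}, so the plan is mostly bookkeeping. The first step is the parameters. Given $d$ and the hypothesis $kt\le n^{1/2-c_0\sqrt{d/\log n}}$, set $\varepsilon:=1/2-\log_n(kt)\ge c_0\sqrt{d/\log n}$. If $kt$ is smaller than the threshold, it only helps, since all coefficient bounds are monotone in $kt/n$; one may equivalently pad the argument with $\varepsilon$ fixed at $c_0\sqrt{d/\log n}$. Then take $\tau=\Theta(d/\varepsilon)$ and check \eqref{eq:param_assumptions}. The upper constraint $\tau\le(\varepsilon/C)\log n$ reads $d/\varepsilon\lesssim\varepsilon\log n$, i.e.\ $\varepsilon^2\gtrsim d/\log n$, which is exactly where the $\sqrt{d/\log n}$ exponent comes from once $c_0$ is large relative to $C$. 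The side condition $\varepsilon>C\log\log n/\log n$ and the lower constraint on $\tau$ are precisely the hypothesis that ``$d$ lies in the range where $\tau$ can be chosen as in Section~\ref{sec:sos_construction},'' so I will simply invoke it.

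Next, with these parameters fixed, I will apply the earlier lemmas on a single high-probability event. This event is the intersection of the events of Lemma~\ref{lem:concentration} and Lemma~\ref{lem:psd_core}, together with the union bound over the $O(\log n)$ layers $Q^{(s)}$ already noted after Lemma~\ref{lem:Q_psd}. On it:
\begin{itemize}[leftmargin=2em,nosep]
\item Lemma~\ref{lem:psd_core} gives $M\succeq 0$, i.e.\ $\wtE_G[p^2]\ge 0$ for all $p\in\cP^{\lfloor d/2\rfloor}_{n,t}$.
\item Lemma~\ref{lem:concentration} gives $\wtE_G[1]=1\pm n^{-\Omega(\varepsilon)}>0$.
\end{itemize}
Dividing by the positive scalar $\wtE_G[1]$ preserves PSD-ness and yields $\wtE_G^\star[1]=1$, so $\wtE_G^\star$ is a degree-$d$ pseudoexpectation. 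The hard constraints hold deterministically for every $G$ by Lemmas~\ref{lem:nonedge} and~\ref{lem:disjoint}. These are statements about every multiplier monomial of degree $\le d-2$, and they extend to every polynomial multiplier $r$ by linearity; rescaling by $\wtE_G[1]$ keeps the value $0$.

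The objective follows by linearity and Lemma~\ref{lem:concentration}:
\[
\wtE_G^\star\Big[\textstyle\sum_{j}\sum_i x_{i,j}\Big]=\frac{\sum_{j=1}^t k(1\pm n^{-\Omega(\varepsilon)})}{1\pm n^{-\Omega(\varepsilon)}}=kt\,(1\pm n^{-\Omega(\varepsilon)}).
\]
Because $\varepsilon>C\log\log n/\log n$, we get $n^{-\Omega(\varepsilon)}\le(\log n)^{-\Omega(C)}=o(1)$, which gives $\ge kt(1-o(1))$. One care point concerns the union over $j\in[t]$ in Lemma~\ref{lem:concentration}. Since it is stated ``for all $j$'' w.h.p., nothing more is needed; alternatively, I would apply concentration directly to the single polynomial $\sum_{j,i}x_{i,j}$, whose coefficients are constant and hence calibrated by Lemma~\ref{lem:calibration_sos}.

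The main obstacle is not in this assembly but in the inherited PSD claim, Lemma~\ref{lem:psd_core}. For the theorem itself, the only delicate point is matching the exponent: confirming that the truncation, concentration, and PSD lemmas all hold under \eqref{eq:param_assumptions} with the single choice $\varepsilon\asymp\sqrt{d/\log n}$, $\tau\asymp\sqrt{d\log n}$. I would verify this by tracking that each error term has the form $(kt/n)^{\Omega(|V|)}n^{O(|V|/\tau\cdot\tau)}$, and that these error terms are dominated once $\varepsilon\tau\gg d$ and $\tau\ll\varepsilon\log n$.
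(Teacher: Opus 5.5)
Your proposal is correct and follows the paper's proof essentially step for step. Hard constraints come from Lemmas~\ref{lem:nonedge} and~\ref{lem:disjoint}, PSD from Lemma~\ref{lem:psd_core}, and normalization and objective value from Lemma~\ref{lem:concentration}, followed by rescaling by $\wtE_G[1]>0$; your explicit check that $n^{-\Omega(\varepsilon)}=o(1)$ via $\varepsilon>C\log\log n/\log n$ is a small detail the paper leaves implicit. One caveat on your parameter bookkeeping, which does not affect the argument: \eqref{eq:param_assumptions} has a factor $t$ in the lower bound $Ct\,d/\varepsilon\le\tau$, so $\tau=\Theta(d/\varepsilon)$ only works for bounded $t$, but since you (like the paper) take the existence of a valid $\tau$ as a hypothesis, the proof goes through.
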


\begin{proof}
Construct $\wtE_G$ as in Definition~\ref{def:trunc} (working in the Boolean quotient so $x_{i,j}^2=x_{i,j}$ holds identically).
Lemmas~\ref{lem:nonedge} and \ref{lem:disjoint} imply exact satisfaction of the hard zero-constraints.
Lemma~\ref{lem:concentration} gives approximate normalization, which may be corrected by rescaling $\wtE_G^\star[p]=\wtE_G[p]/\wtE_G[1]$.
Lemma~\ref{lem:psd_core} implies PSD, hence $\wtE_G^\star$ is a valid degree-$d$ pseudoexpectation feasible for optimization program~\eqref{sosopt}.
Finally, Lemma~\ref{lem:concentration} implies $\wtE_G^\star[\sum_{j,i}x_{i,j}]=kt(1\pm n^{-\Omega(\varepsilon)})$, and so $\mathrm{OPT}_d(G)\ge kt(1-o(1))$.
\end{proof}
Consequently, degree-$d$ SoS fails to refute the existence of $t$ disjoint cliques of size $k$ in $G$.

\begin{remark}\label{rem:sos_no_reduction}
One might hope to derive Theorem~\ref{thm:sos_formal} from existing single-plant SoS lower bounds 
(e.g.,~\cite{MekaPW15,BarakHKKMP19}) via a black-box reduction. However, two obstacles prevent this.
First, the disjointness constraints in the multi-plant relaxation couple variables across cliques: 
a valid pseudoexpectation must satisfy $\wtE[x_{i,j_1} x_{i,j_2} p] = 0$ for all $i \in [n]$, 
all $j_1 \neq j_2$, and all polynomials $p$ of appropriate degree. 
This precludes na\"ive ``tensoring'' of $t$ independent single-plant pseudoexpectations, 
since such a product would not respect the constraint that each vertex belongs to at most one clique.
Second, a partition-based embedding running $t$ independent single-plant instances on disjoint 
vertex sets of size $n/t$ yields only the weaker threshold 
$
kt \le t  (n/t)^{1/2 - \varepsilon} = n^{1/2-\varepsilon}  t^{1/2+\varepsilon},
$
which degrades with $t$ rather than maintaining the sharp $kt \le n^{1/2-\varepsilon}$ barrier.
\end{remark}


\section{Statistical Query Lower Bound for Multiple Planted Bicliques}\label{part:sq}
Here we give a lower bound in the statistical query (SQ) model, introduced by~\cite{Kearns1998};
see also~\cite{BlumFJKMR94} for early foundational results and \cite{Reyzin2020} for a survey.
Our argument follows the statistical dimension framework of~\cite{FeldmanGRVX17},
but requires a nontrivial specialization to the multiple planted biclique (row-mixture)
setting.
In particular, our hard family consists of all ordered plantings of $t$ disjoint
$k$-subsets, and the analysis must control average correlations for every large
subfamily of these equal-size multi-plant distributions.

\subsection{Planted bicliques and SQ preliminaries}

\begin{definition}[planting]
A \emph{planting} is an ordered $t$-tuple $S=(S_1,...,S_t)$ of disjoint $k$-subsets~of~$[n]$.
\end{definition}

\begin{definition}[planted multi-biclique detection distribution]\label{def:Ds}
Let $n,k,t$ be positive integers with $kt\le n$, and let $S=(S_1,\dots,S_t)$ be a planting. Let
$\mathcal D_0=\mathrm{Uniform}(\{0,1\}^n)$ and $p=\frac{kt}{n}$. Define
\vskip-.8em
\[
  \mathsf D_S(x)
  \;=\;
  (1-p)\,\mathcal D_0(x)\;+\;\tfrac{p}{t}\textstyle\sum_{i=1}^t \mathcal D_0\!\left(x\,\middle|\, x_v=1~\forall v\in S_i\right).
\]
\end{definition}
\vskip-.5em
Equivalently, with probability $1-p$ one samples $x\sim\mathcal D_0$, and with probability $p$ one chooses
$i\in[t]$ uniformly, sets $x_v=1$ for all $v\in S_i$, and samples the remaining coordinates uniformly.
This is precisely a \emph{row-mixture} distribution, where each component corresponds to conditioning
$\mathcal D_0$ on a planted $k$-subset $S_i$.
The goal is to distinguish $\mathcal D_0$ from $\mathsf D_S$.

\begin{definition}[VSTAT oracle; c.f.~\cite{FeldmanGRVX17}]
The oracle $\mathrm{VSTAT}(N)$ for a distribution $D$ over domain $X$ takes a sample-size parameter $N$ and, for any query $f:X\to[0,1]$, returns a value $v$ satisfying
$
\bigl|v-\E_{x\sim D}[f(x)]\bigr|\le \max\!\left\{1/N,~\sqrt{{\mathrm{Var}_{x\sim D}[f(x)]}/{N}}\right\}.
$
\end{definition}

\begin{definition}[SDA; c.f.~\cite{FeldmanGRVX17}]\label{def:sda}
Let $\mathcal D_0$ be a reference distribution over $X$, and let $\mathcal F=\{D_1,\dots,D_m\}$ be a family of distributions over $X$.
Define $\widehat D_i(x)=D_i(x)/\mathcal D_0(x)-1$ and the inner product
$\langle f,g\rangle_{\mathcal D_0}=\E_{x\sim \mathcal D_0}[f(x)g(x)]$.
For $A\subseteq[m]$, set
$
\mathrm{AvgCorr}(A)=\frac{1}{|A|^2}\sum_{i,j\in A}\langle \widehat D_i,\widehat D_j\rangle_{\mathcal D_0}.
$
For $\bar\gamma>0$, $\mathrm{SDA}(\mathcal F,\mathcal D_0;\bar\gamma)$ is the largest $d$ s.t.\ every $A\subseteq[m]$ w/ $|A|\ge m/d$ has $\mathrm{AvgCorr}(A)\le\bar\gamma$.
\end{definition}

\begin{theorem}[\cite{FeldmanGRVX17}]\label{thm:sq_via_sda}
If $\mathrm{SDA}(\mathcal F,\mathcal D_0;\bar\gamma)\ge d$, then any randomized SQ algorithm that distinguishes a distribution in $\mathcal F$ from $\mathcal D_0$
with success probability at least $2/3$ requires at least $d$ queries to $\mathrm{VSTAT}(N)$, where $N=\Theta(1/\bar\gamma)$.
\end{theorem}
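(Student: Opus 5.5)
The approach is the standard simulation argument of~\cite{FeldmanGRVX17}. Suppose a (deterministic, for now) SQ algorithm $\mathcal A$ makes $q<d$ queries to $\mathrm{VSTAT}(N)$. We run $\mathcal A$ with a fictitious oracle that answers every query $f$ with the null value $p_f:=\E_{\mathcal D_0}[f]$. These answers are valid for $\mathcal D_0$. If we can exhibit some $D_i\in\mathcal F$ for which every one of the $q$ answers is also a valid $\mathrm{VSTAT}(N)$ response, then $\mathcal A$ sees an identical transcript on $D_i$ and on $\mathcal D_0$, so it cannot output the correct answer on both. It therefore suffices to show that, for each fixed query $f$, the set $A_f\subseteq[m]$ of indices $i$ for which $p_f$ is an \emph{invalid} answer for $D_i$ has $|A_f|<m/d$. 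A union bound over $q<d$ queries then leaves an index outside $\bigcup_f A_f$.

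\textbf{Step 1 (reduce invalidity to correlation).} Write $p_i:=\E_{D_i}[f]$. Then $p_i-p_f=\langle f-p_f,\widehat D_i\rangle_{\mathcal D_0}$, since $\E_{\mathcal D_0}[\widehat D_i]=0$. Invalidity means $|p_i-p_f|>\max\{1/N,\sqrt{p_i(1-p_i)/N}\}$, where I use $\Var_{D_i}[f]\le p_i(1-p_i)$ for $f$ with values in $[0,1]$. An elementary case analysis (comparing $p_i$ to $p_f$, and treating $p_f\le 1/N$ separately via the $1/N$ floor) should show that invalidity forces $|p_i-p_f|\ge \sqrt{p_f(1-p_f)/(cN)}$ for an absolute constant $c$.

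\textbf{Step 2 (Cauchy--Schwarz against the average correlation).} Split $A_f$ into $A^+$ and $A^-$ according to the sign of $p_i-p_f$, and take the larger half, say $A^+$. Then
\[
\textstyle |A^+|\sqrt{p_f(1-p_f)/(cN)}\le \bigl\langle f-p_f,\sum_{i\in A^+}\widehat D_i\bigr\rangle_{\mathcal D_0}\le \|f-p_f\|_{\mathcal D_0}\,\bigl\|\sum_{i\in A^+}\widehat D_i\bigr\|_{\mathcal D_0}.
\]
Here $\|f-p_f\|^2=\Var_{\mathcal D_0}[f]\le p_f(1-p_f)$ and $\|\sum_{i\in A^+}\widehat D_i\|^2=|A^+|^2\,\mathrm{AvgCorr}(A^+)$. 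Suppose, for contradiction, that $|A^+|\ge m/d$. Then the SDA hypothesis gives $\mathrm{AvgCorr}(A^+)\le\bar\gamma$, and the display becomes $1/(cN)\le\bar\gamma$. Choosing $N=1/(2c\bar\gamma)=\Theta(1/\bar\gamma)$ yields a contradiction. Hence $|A^\pm|<m/d$, so $|A_f|<2m/d$. The harmless factor $2$ is absorbed either by running the argument with $d/2$ or by using the SDA threshold at the correct scale.

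\textbf{Step 3 (randomized algorithms and the obstacle).} For randomized algorithms, fix the random string. The argument shows that for each fixed string all but a $<q\cdot 2/d$ fraction of the $D_i$ receive a transcript identical to the one for $\mathcal D_0$. Averaging over $i$ and over the random string, success probability $\ge 2/3$ on both $\mathcal D_0$ and every $D_i$ is impossible once $q$ is a small constant times $d$, which gives the stated bound up to the constants hidden in $\Theta$.

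I expect the main obstacle to be the case analysis in Step 1. VSTAT's tolerance is measured in terms of the variance under the \emph{unknown} $D_i$, whereas Cauchy--Schwarz naturally produces $\Var_{\mathcal D_0}$. Converting one into the other near $p_f\approx 0$ or $p_f\approx 1$ requires using the $1/N$ floor carefully. I would first treat $p_f\le 1/N$ directly, then handle the case $p_i\le 2p_f$ versus $p_i>2p_f$ (and symmetrically for $1-p$), exactly as in~\cite{FeldmanGRVX17}.
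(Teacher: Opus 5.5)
The paper does not prove this statement; it imports it from \cite{FeldmanGRVX17}, and your simulation argument is the one used there. As written, however, your Step~1 inverts an inequality, and this matters for the oracle as this paper defines it.

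Here $\mathrm{VSTAT}(N)$ answers $[0,1]$-valued queries with tolerance $\max\{1/N,\sqrt{\Var_{D}[f]/N}\}$. Since $\Var_{D_i}[f]\le p_i(1-p_i)$, this tolerance is \emph{at most} $\max\{1/N,\sqrt{p_i(1-p_i)/N}\}$. So invalidity of $p_f$ for $D_i$ gives only $|p_i-p_f|>\max\{1/N,\sqrt{\Var_{D_i}[f]/N}\}$. It does not give $|p_i-p_f|>\sqrt{p_i(1-p_i)/N}$.

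Moreover, the per-index bound $|p_i-p_f|\ge\sqrt{p_f(1-p_f)/(cN)}$ that your Step~2 needs is false in general. Take $f\equiv p_f+2/N$ on the support of $D_i$ (a set of small $\mathcal D_0$-mass), with $\Var_{\mathcal D_0}[f]$ of constant order. Then $\Var_{D_i}[f]=0$, the gap $2/N$ already makes $p_f$ invalid, and yet $2/N\ll\sqrt{\Var_{\mathcal D_0}[f]/N}$. No case analysis with the $1/N$ floor repairs this, because the SDA hypothesis says nothing about $\Var_{D_i}[f]$ for an individual $i$.

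Your Steps~1--2 are correct for Boolean queries, or for the $p(1-p)$ tolerance of \cite{FeldmanGRVX17}, where the variance is exactly $p(1-p)$. But that oracle is more permissive, so a lower bound against it does not transfer to the one defined here.

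The missing idea is a second correlation argument that lower-bounds $\Var_{D_i}[f]$ for all but few $i$. Put $V:=\Var_{\mathcal D_0}[f]$ and $h:=(f-p_f)^2\in[0,1]$, so $\Var_{\mathcal D_0}[h]\le\E_{\mathcal D_0}[h]=V$.
\begin{itemize}
\item Case $V\ge 4\bar\gamma$. Running your Step~2 on $h$ shows that the set $B$ of indices with $\E_{D_i}[h]=\Var_{D_i}[f]+(p_i-p_f)^2<V/2$ has $|B|<m/d$. For $i\notin B$, invalidity forces $|p_i-p_f|\ge\sqrt{V/(4N)}$. Your Cauchy--Schwarz step then gives a contradiction once $N<1/(4\bar\gamma)$.
\item Case $V<4\bar\gamma$. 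Cauchy--Schwarz caps the average gap over any sign class of size $\ge m/d$ by $\sqrt{V\bar\gamma}<2\bar\gamma<1/N$. This contradicts invalidity directly.
\end{itemize}
For the family actually used in the paper there is a shortcut: $\mathsf D_S\ge(1-p)\mathcal D_0$ pointwise, hence $\Var_{\mathsf D_S}[f]\ge(1-p)V$.

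Either way you get at most $3m/d$ bad indices per query, and your Step~3 averaging then yields $q\ge c\,d$ for an absolute constant $c<1$. This constant is lost in the query count, not in $N=\Theta(1/\bar\gamma)$, so you do not recover the literal ``at least $d$''. That is harmless here, since only $\Omega(d)$ is used in Theorem~\ref{cor:sq_vstat}.
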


\subsection{Pairwise correlation}\label{sec:sq_pairwise}

Fix two plantings $S=(S_1,\dots,S_t)$ and $T=(T_1,\dots,T_t)$. Define
\vskip-.7em
\[
\widehat D_S(x)=\tfrac{\mathsf D_S(x)}{\mathcal D_0(x)}-1.
\]
Write $\lambda_{ij}=|S_i\cap T_j|$ and $\Lambda(S,T)=\sum_{i,j}\lambda_{ij}$.
\begin{lemma}[proof in~\ref{app:sq_proofs}]\label{lem:sq_corr}
For any two plantings $S,T$,
\vskip-.5em
\[
\bigl\langle \widehat D_S,\widehat D_T\bigr\rangle_{\mathcal D_0}
~=~p^2\!\left(\tfrac{1}{t^2}\textstyle\sum_{i,j}2^{\lambda_{ij}}-1\right)
~\le~ p^2\,\tfrac{2^{\Lambda(S,T)}-1}{t^2}
~\le~ \tfrac{k^2}{n^2}\,2^{\Lambda(S,T)}.
\]
\end{lemma}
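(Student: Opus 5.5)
We compute the correlation directly from the mixture form of $\mathsf D_S$. For a $k$-subset $R\subseteq[n]$, let $P_R:=\mathcal D_0(\cdot\mid x_v=1~\forall v\in R)$. Its likelihood ratio against $\mathcal D_0$ is
\[
g_R(x)=\frac{P_R(x)}{\mathcal D_0(x)}=2^{k}\,\mathbf 1[x_v=1~\forall v\in R].
\]
By Definition~\ref{def:Ds},
\[
\frac{\mathsf D_S}{\mathcal D_0}=(1-p)+\frac{p}{t}\sum_i g_{S_i},
\qquad\text{so}\qquad
\widehat D_S=\frac{p}{t}\sum_{i=1}^t\bigl(g_{S_i}-1\bigr).
\]
Expanding bilinearly gives
\[
\langle \widehat D_S,\widehat D_T\rangle_{\mathcal D_0}=\frac{p^2}{t^2}\sum_{i,j}\langle g_{S_i}-1,\,g_{T_j}-1\rangle_{\mathcal D_0}.
\]
Since $\E_{\mathcal D_0}[g_R]=1$, each summand equals $\E_{\mathcal D_0}[g_{S_i}g_{T_j}]-1$.

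Next we evaluate this single-pair term. We have
\[
\E_{\mathcal D_0}[g_{S_i}g_{T_j}]=2^{2k}\Pr_{\mathcal D_0}\bigl[x\equiv 1\text{ on }S_i\cup T_j\bigr]=2^{2k}\,2^{-(2k-\lambda_{ij})}=2^{\lambda_{ij}}.
\]
Summing over $i,j$ then yields
\[
\frac{p^2}{t^2}\sum_{i,j}\bigl(2^{\lambda_{ij}}-1\bigr)=p^2\Bigl(\frac{1}{t^2}\sum_{i,j}2^{\lambda_{ij}}-1\Bigr),
\]
which is the claimed identity.

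For the first inequality, write $\sum_{i,j}(2^{\lambda_{ij}}-1)$ and use superadditivity of $a\mapsto 2^a-1$ on nonnegative integers: $(2^a-1)+(2^b-1)\le 2^{a+b}-1$, which is equivalent to $(2^a-1)(2^b-1)\ge 0$. Iterating this gives $\sum_{i,j}(2^{\lambda_{ij}}-1)\le 2^{\Lambda}-1$, hence the bound $p^2(2^{\Lambda}-1)/t^2$. For the last inequality, substitute $p=kt/n$: then $p^2/t^2=k^2/n^2$, and we drop the $-1$.

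The only step that needs care is the superadditivity bound. A naive bound $\sum_{i,j}2^{\lambda_{ij}}\le t^2\,2^{\Lambda}$ would lose a factor of $t^2$. Working with the centered terms $2^{\lambda_{ij}}-1$ avoids this, because the pairs with $\lambda_{ij}=0$ contribute exactly zero.
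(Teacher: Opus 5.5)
Your proof is correct and follows essentially the same route as the paper: you write $\widehat D_S$ as a normalized sum of centered component likelihood ratios, compute each pairwise term as $2^{\lambda_{ij}}$ from $|S_i\cup T_j|=2k-\lambda_{ij}$, and then bound $\sum_{i,j}2^{\lambda_{ij}}\le t^2-1+2^{\Lambda}$. The only difference is how that last bound is justified: you use superadditivity of $a\mapsto 2^a-1$ (equivalently $(2^a-1)(2^b-1)\ge 0$), while the paper appeals to convexity of $2^x$ to concentrate all overlap in one pair, and the two arguments are equivalent.
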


\subsection{Counting overlaps}\label{sec:sq_count}

Let $m$ be the number of ordered plantings of $t$ disjoint $k$-subsets of $[n]$.

\begin{lemma}[proof in~\ref{app:sq_proofs}]\label{lem:sq_m}
If $kt\le n$, then
$
m=\binom{n}{kt}\,(kt)!/(k!)^t.
$
\end{lemma}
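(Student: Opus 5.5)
We count the ordered plantings in two stages. First we choose the union of the planted sets. Then we split that union into an ordered tuple of blocks.

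\textbf{Step 1 (the union).} For any planting $S=(S_1,\dots,S_t)$, the sets $S_i$ are pairwise disjoint and each has size $k$. So the union $U:=\bigcup_i S_i$ has size exactly $kt$, and since $kt\le n$ it is a $kt$-subset of $[n]$. Conversely, every $kt$-subset $U\subseteq[n]$ arises in this way. This gives the factor $\binom{n}{kt}$.

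\textbf{Step 2 (ordered partitions of a fixed union).} Fix $U$ with $|U|=kt$. We must count the ordered $t$-tuples $(S_1,\dots,S_t)$ of pairwise disjoint $k$-subsets whose union is $U$. These are exactly the ordered partitions of $U$ into $t$ labeled blocks of size $k$. We use the standard surjection argument. Take a linear ordering $\pi$ of $U$; there are $(kt)!$ of them. Let $S_i$ be the set of elements in positions $(i-1)k+1,\dots,ik$.

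Every ordered partition is obtained this way. Two orderings give the same ordered partition if and only if they differ by a permutation within each block of $k$ consecutive positions. Each fiber therefore has size exactly $(k!)^t$. Equivalently, we can take the multinomial coefficient $\binom{kt}{k,k,\dots,k}=(kt)!/(k!)^t$. Alternatively, choose the blocks sequentially: this gives $\prod_{i=0}^{t-1}\binom{kt-ik}{k}$, which telescopes to the same value.

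\textbf{Step 3 (combining).} The map $S\mapsto U$ partitions the set of plantings into classes. Each class has exactly $(kt)!/(k!)^t$ elements, independent of $U$. Multiplying the two counts gives $m=\binom{n}{kt}(kt)!/(k!)^t$.

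A useful sanity check is the direct sequential count $\prod_{i=0}^{t-1}\binom{n-ik}{k}=\frac{n!}{(n-kt)!\,(k!)^t}$, which agrees.

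There is no real obstacle here. The only point needing care is that the tuple is \emph{ordered}, so we do not divide by $t!$. This matches the definition of a planting as an ordered $t$-tuple.
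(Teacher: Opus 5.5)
Your proof is correct, but you reach the formula by a different decomposition than the paper. The paper's proof is the computation you relegate to a sanity check. It chooses the blocks sequentially from $[n]$, obtaining $\prod_{i=1}^t \binom{n-(i-1)k}{k}$. It telescopes this to $n!/((n-kt)!\,(k!)^t)$ and only then rewrites it algebraically as $\binom{n}{kt}(kt)!/(k!)^t$.

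Your route produces the stated form directly, as the product of two counts: the choice of the union $U$, and the multinomial number of ordered partitions of $U$. This requires your (correct) argument that every $kt$-subset of $[n]$ has exactly $(kt)!/(k!)^t$ preimages under $S\mapsto\bigcup_i S_i$.

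That equal-fiber fact is worth having. It is exactly what justifies the claim in the proof of Lemma~\ref{lem:sq_tl} that, for a uniformly random planting $T$, the union $\bigcup_j T_j$ is a uniformly random $kt$-subset of $[n]$. The paper states that claim without further justification. The paper's telescoping computation is shorter and purely algebraic, but it gives only the cardinality $m$, not this uniformity.
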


\begin{definition}[$\T_\ell$]
Fix a reference planting $S$. For $\ell\ge0$ let
$
\T_\ell=\{T:\Lambda(S,T)=\ell\}
$
be the collection of plantings with total overlap $\ell$ with $S$.
\end{definition}

\begin{lemma}\label{lem:sq_tl}
Fix integers $n,k,t$ and let $\ell=O(\ln n)$ with $\ell^2=o(kt)$. Assume $kt=O(n^{1/2-\delta})$ for some $\delta>0$. Then, as $n\to\infty$,
\vskip-1em
\[
{|\T_\ell|}/{m}
=\Prb[X=\ell]
={\textstyle\binom{kt}{\ell}\textstyle\binom{n-kt}{kt-\ell}}/{\textstyle\binom{n}{kt}}
=\tfrac{1}{\ell!}\Bigl(\tfrac{k^2t^2}{n}\Bigr)^{\!\ell}\,(1+o(1)).
\]
\end{lemma}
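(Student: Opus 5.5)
The proof has three steps. First we identify $|\T_\ell|/m$ with a hypergeometric probability. Then we check that this probability has the stated closed form. Finally we run the asymptotics.

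\textbf{Step 1: reduction to a hypergeometric law.} Draw $T$ uniformly among the $m$ ordered plantings. Since the $S_i$ are disjoint and the $T_j$ are disjoint, $\Lambda(S,T)=\sum_{i,j}|S_i\cap T_j| = |U_S\cap U_T|$, where $U_S=\bigcup_i S_i$ and $U_T=\bigcup_j T_j$ are sets of size $kt$. By Lemma~\ref{lem:sq_m}, each $kt$-subset $U\subseteq[n]$ arises as $U_T$ for exactly $(kt)!/(k!)^t$ plantings. Hence $U_T$ is a uniformly random $kt$-subset of $[n]$, and $X:=\Lambda(S,T)=|U_S\cap U_T|$ is hypergeometric. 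This gives $|\T_\ell|/m=\Prb[X=\ell]=\binom{kt}{\ell}\binom{n-kt}{kt-\ell}/\binom{n}{kt}$.

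\textbf{Step 2: exact product form.} Write $K=kt$. Expanding the binomials and cancelling factorials,
\[
\frac{\binom{K}{\ell}\binom{n-K}{K-\ell}}{\binom{n}{K}}=\frac{1}{\ell!}\cdot\frac{\bigl((K)_\ell\bigr)^2\,(n-K)_{K-\ell}}{(n)_K},
\]
where $(a)_j$ denotes the falling factorial. Equivalently, this is $\frac{1}{\ell!}\,\frac{((K)_\ell)^2}{(n)_\ell}\cdot\frac{(n-K)_{K-\ell}}{(n-\ell)_{K-\ell}}$.

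\textbf{Step 3: asymptotics.} We treat the three factors separately.
\begin{itemize}[leftmargin=2em,nosep]
\item Since $\ell^2=o(K)$, we have $(K)_\ell=K^\ell\prod_{i<\ell}(1-i/K)=K^\ell(1+O(\ell^2/K))=K^\ell(1+o(1))$.
\item Since $\ell=O(\ln n)$, we have $\ell^2/n=o(1)$, so $(n)_\ell=n^\ell(1+o(1))$.
\item The last ratio is $\prod_{j=0}^{K-\ell-1}\frac{n-K-j}{n-\ell-j}$. Each factor equals $1-\frac{K-\ell}{n-\ell-j}$, so the product is $\exp\bigl(-O(K^2/n)\bigr)$. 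Because $K=O(n^{1/2-\delta})$, we get $K^2/n=O(n^{-2\delta})=o(1)$, and this factor is also $1+o(1)$.
\end{itemize}
Combining the three estimates gives $\frac{1}{\ell!}(K^2/n)^\ell(1+o(1))$, as claimed.

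The main obstacle is Step 1, specifically verifying that the total overlap depends only on the union $U_T$ and that the map from plantings to unions is uniform-to-one. Once that is in place, the remaining work is the routine error control in Step 3. There the only delicate point is that the $(1+o(1))$ error stays uniform in $\ell$, which the hypotheses $\ell^2=o(kt)$ and $\ell=O(\ln n)$ supply.
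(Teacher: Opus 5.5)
Your proof is correct, and its first step is the paper's. Both arguments observe that $\Lambda(S,T)=|U_S\cap U_T|$ and that $U_T$ is a uniform $kt$-subset. You also justify the uniformity through the constant fiber size $(kt)!/(k!)^t$ (Lemma~\ref{lem:sq_m} applied on a ground set of size $kt$), which the paper leaves implicit.

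The two proofs differ in the asymptotic step. The paper delegates it to Lemma~\ref{lem:hypergeom_sparse_asymp}. That lemma compares the hypergeometric law to $\mathrm{Bin}(K,K/n)$ by a sequential-exposure argument and then simplifies $\binom{K}{\ell}(K/n)^\ell(1-K/n)^{K-\ell}$. You instead write the mass exactly as $\frac{1}{\ell!}\frac{((K)_\ell)^2}{(n)_\ell}\cdot\frac{(n-K)_{K-\ell}}{(n-\ell)_{K-\ell}}$ and bound each factor directly.

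The paper's route is modular, since the appendix lemma is stated for general $K$ and could be reused. However, its asserted relative error $1+O(K^2/n)$ for the binomial comparison is too optimistic. The ratio of hypergeometric to binomial mass contains the factor $(K)_\ell/K^\ell=1-O(\ell^2/K)$, so the true relative error is $1+O(\ell^2/K+K^2/n)$. The paper's conclusion survives only because $\ell^2=o(K)$ is assumed.

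Your exact factorization isolates the three error terms $O(\ell^2/K)$, $O(\ell^2/n)$ and $O(K^2/n)$. It shows exactly where $\ell^2=o(kt)$ and where $kt=o(\sqrt n)$ are used, so it is the more careful of the two arguments. The final asymptotic is the same in both.
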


\begin{proof}
Since the $S_i$'s and $T_j$'s are pairwise disjoint within each family,
\(
\Lambda(S,T)=\bigl|\bigcup_i S_i \cap \bigcup_j T_j\bigr|.
\)
For uniformly random $T$, the union $\bigcup_j T_j$ is a uniformly random $kt$-subset of $[n]$.
Hence $X:=\Lambda(S,T)$ is hypergeometric with mass
$\Prb[X=\ell]={\binom{kt}{\ell}\binom{n-kt}{kt-\ell}}/{\binom{n}{kt}}$.
Since $kt=O(n^{1/2-\delta})$, we have $kt=o(\sqrt n)$.
Under the stated hypotheses $\ell=O(\ln n)$ and $\ell^2=o(kt)$, we may apply Lemma~\ref{lem:hypergeom_sparse_asymp} with $K=kt$ to obtain
$\Prb[X=\ell]
=\frac{1}{\ell!}\left({(kt)^2}/{n}\right)^{\!\ell}(1+o(1))
=\frac{1}{\ell!}\left({k^2t^2}/{n}\right)^{\!\ell}(1+o(1)).$
\end{proof}

\subsection{SQ lower bound}\label{sec:sq_main}

\begin{theorem}[proof in~\ref{app:sq_proofs}]\label{thm:sq_sda}
Suppose $kt=O(n^{1/2-\delta})$ for some fixed $\delta>0$.
Fix an integer $\ell$ such that
$
0\le \ell \le \min\left\{kt,\ \left\lfloor \log_2 ({n}/(k^2t^2))\right\rfloor-1\right\},
$
with $\ell = O(\log n)$ and $\ell^2=o(kt).$
Define
$
d:=\left\lfloor \ell!\left({n}/{(k^2t^2)}\right)^{\!\ell}\right\rfloor,$
$\bar\gamma:=2\,({k^2}/{n^2})\,2^\ell.
$
Then, for all sufficiently large $n$,
$
\mathrm{SDA}\bigl(\mathcal F,\mathcal D_0;\bar\gamma\bigr)\ge d,
$
where $\mathcal F=\{\mathsf D_S : S \text{ is an ordered planting}\}$.
\end{theorem}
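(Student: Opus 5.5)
We follow the standard Feldman et al.\ template: fix an arbitrary subfamily $A\subseteq\mathcal F$ with $|A|\ge m/d$ and bound $\mathrm{AvgCorr}(A)\le\bar\gamma$. For fixed $S\in A$ we split the inner sum over $T\in A$ by overlap. Plantings $T$ with $\Lambda(S,T)\le \ell$ contribute little, since each such term is at most $\tfrac{k^2}{n^2}2^{\ell}$ by Lemma~\ref{lem:sq_corr}. Plantings with $\Lambda(S,T)>\ell$ may have larger correlation, but they form a small fraction of all plantings, and hence of $A$ since $A$ is large.

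\textbf{Step 1 (bound the high-overlap mass).} Using Lemma~\ref{lem:sq_corr}, write
\[
\sum_{T\in A}\langle\widehat D_S,\widehat D_T\rangle \le \tfrac{k^2}{n^2}\Bigl(|A|\,2^{\ell}+\sum_{j>\ell}2^{j}\,|A\cap\T_j|\Bigr).
\]
We then bound $\sum_{j>\ell}2^j|\T_j|$. By Lemma~\ref{lem:sq_tl}, $|\T_j|/m \approx \tfrac1{j!}(k^2t^2/n)^j$, so $2^j|\T_j|/m\approx \tfrac1{j!}(2k^2t^2/n)^j$. The constraint $\ell\le\log_2(n/(k^2t^2))-1$ gives $2^{\ell+1}k^2t^2/n\le 1$, which is what controls the sum.

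It is cleaner to use the hypergeometric tail directly instead of the asymptotic of Lemma~\ref{lem:sq_tl}. For large $j$ the asymptotic is not available, since its hypothesis requires $j^2=o(kt)$. Instead we use the crude bound $\Pr[X=j]\le\binom{kt}{j}(kt/(n-kt))^j\le \tfrac{1}{j!}(k^2t^2/(n-kt))^j$, which holds for all $j$. Summing, $\sum_{j>\ell}2^j|\T_j|/m \le \sum_{j>\ell}\tfrac1{j!}(2k^2t^2/n)^j(1+o(1))$. Since $2k^2t^2/n\le n^{-2\delta}$, this is dominated by its first term.

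\textbf{Step 2 (worst-case placement of $A$).} The sum $\sum_{T\in A}2^{\Lambda(S,T)}$ is maximized when $A$ contains the highest-overlap plantings. We therefore bound it by the $|A|$ largest values of $2^{\Lambda}$. Those with $\Lambda\le\ell$ contribute at most $|A|2^\ell$. For those with $\Lambda>\ell$, we use Step 1 together with $|A|\ge m/d$, which gives
\[
\sum_{j>\ell}2^j|\T_j| \le m\cdot\tfrac{2^{\ell+1}}{(\ell+1)!}(k^2t^2/n)^{\ell+1}(1+o(1)) \le |A|\, d\,\tfrac{2^{\ell+1}}{(\ell+1)!}(k^2t^2/n)^{\ell+1}(1+o(1)).
\]
Substituting $d\le \ell!(n/k^2t^2)^\ell$, the right side becomes $|A|\,2^{\ell}\cdot\tfrac{2k^2t^2}{(\ell+1)n}(1+o(1))$. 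This is $o(|A|2^\ell)$ because $k^2t^2/n\to0$.

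\textbf{Step 3 (conclude).} Each row sum is at most $\tfrac{k^2}{n^2}|A|2^\ell(1+o(1))$. Averaging over $S\in A$ gives $\mathrm{AvgCorr}(A)\le \tfrac{k^2}{n^2}2^\ell(1+o(1))\le\bar\gamma$ for large $n$. This yields $\mathrm{SDA}\ge d$.

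The main obstacle is Step 1's tail control. Lemma~\ref{lem:sq_tl} only gives asymptotics for small $j$, so the tail over $j$ up to $kt$ must be handled uniformly by the explicit hypergeometric bound, with the $1+o(1)$ factors kept uniform in $j$. A secondary subtlety is the edge case $\ell=0$ (where $d=1$). There the bound follows directly from $\sum_T 2^{\Lambda(S,T)}/m=\E[2^X]\le(1+kt/(n-kt))^{kt}=1+o(1)$.
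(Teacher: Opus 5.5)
Your proposal is correct and follows essentially the same route as the paper. Both arguments split $A\times A$ by overlap $\Lambda(S,T)$, bound the low-overlap pairs by $(k^2/n^2)2^\ell$ using Lemma~\ref{lem:sq_corr}, and control the high-overlap mass via $|A|\ge m/d$ and a hypergeometric tail bound valid for all $j$, $\Pr[X=j]\lesssim (k^2t^2/n)^j/j!$, summed as a Poisson-type tail (the paper packages this step as Lemmas~\ref{lem:moment_bound_high_overlap} and~\ref{lem:poisson_tail_full}). Multiplying by $d$ then leaves a term of relative size $O(k^2t^2/n)=o(1)$, and your separate treatment of $\ell=0$ is harmless but unnecessary, since the general argument already covers that case.
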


\begin{proof}[of Theorem~\ref{thm:sq_sda}]
By Lemma~\ref{lem:sq_tl}, $m/d=|\T_\ell|(1+o(1))$. Let $A$ be any set of plantings with $|A|\ge m/d$.
Split pairs $(S,T)\in A\times A$ by $j=\Lambda(S,T)$.

\noindent \emph{Low overlap $j\le \ell$}:
By Lemma~\ref{lem:sq_corr}, each such pair contributes at most $(k^2/n^2)\,2^\ell$.

\noindent \emph{High overlap $j>\ell$}:
by Lemma~\ref{lem:sq_corr}, each pair with overlap $j$ contributes at most $(k^2/n^2)\,2^j$.
For fixed $S$, there are at most $|\T_j|$ choices of $T$ with overlap $j$, so the number of ordered pairs in $A\times A$ with overlap $j$
is at most $|A|\,|\T_j|$.
Thus the high-overlap contribution to average correlation is at most
\vskip-.5em
\[
\tfrac{k^2}{n^2}\textstyle\sum_{j>\ell} 2^j \cdot \frac{|A|\,|\T_j|}{|A|^2}
\le \frac{k^2}{n^2}\cdot \frac{d}{m}\textstyle\sum_{j>\ell} 2^j |\T_j|
= \frac{k^2}{n^2}\,d\cdot \E\!\left[2^X\mathbf 1\{X>\ell\}\right],
\]
where $X$ is hypergeometric with $\Prb[X=j]=|\T_j|/m$.

By Lemma~\ref{lem:moment_bound_high_overlap} (Appendix~\ref{app:sq_bounds}) with $K=kt$,
$
\E\!\left[2^X\mathbf 1\{X>\ell\}\right]\le \sum_{j>\ell}{(2k^2t^2/n)^j}/{j!}.
$
Apply Lemma~\ref{lem:poisson_tail_full} (Appendix~\ref{app:sq_bounds}) with $\lambda=2k^2t^2/n=o(1)$
to bound this tail by $O(\lambda^{\ell+1}/(\ell+1)!)$.
Multiplying by $d=\ell!(n/(k^2t^2))^\ell$ yields an $o((k^2/n^2)2^\ell)$ contribution in the stated regime, so
$\mathrm{AvgCorr}(A)\le \bar\gamma+o(1)$ for every $A$ with $|A|\ge m/d$.
Hence $\mathrm{SDA}\ge d$.
\end{proof}


\noindent We now state the resulting SQ lower bound, which immediately follows.

\begin{theorem}[SQ lower bound]\label{cor:sq_vstat}
Under the hypotheses of Theorem~\ref{thm:sq_sda}, any randomized SQ algorithm that distinguishes an unknown $\mathsf D_S\in\mathcal F$
from $\mathcal D_0$ with success probability at least $2/3$ must make $\Omega(d)$ queries to $\mathrm{VSTAT}(N)$ with $N=\Theta(1/\bar\gamma)$.
In particular, assuming additionally that $kt=\omega((\log n)^2)$ and choosing
$\ell=\lfloor \delta \log_2 n\rfloor$ yields
\vskip-1em
\[
d=\exp(\Omega((\ln n)^2))
\qquad\text{and}\qquad
N=\Theta\!\left({n^{2-\delta}}/{k^2}\right).
\]
\end{theorem}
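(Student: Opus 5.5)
The first claim is immediate from the two results already in place. Theorem~\ref{thm:sq_sda} gives $\mathrm{SDA}(\mathcal F,\mathcal D_0;\bar\gamma)\ge d$ for $d=\lfloor \ell!\,(n/(k^2t^2))^\ell\rfloor$ and $\bar\gamma=2(k^2/n^2)2^\ell$. Plugging this into Theorem~\ref{thm:sq_via_sda} shows that any randomized SQ algorithm succeeding with probability $2/3$ must make at least $d$ queries to $\mathrm{VSTAT}(N)$ with $N=\Theta(1/\bar\gamma)$. So what remains is to verify the parameter choice and evaluate the two expressions.

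\textbf{Checking the hypotheses.} Set $\ell=\lfloor \delta\log_2 n\rfloor$. I would check, in order, the conditions required by Theorem~\ref{thm:sq_sda}:
\begin{enumerate}[leftmargin=2em,nosep]
\item $\ell=O(\log n)$ holds trivially.
\item $\ell^2=o(kt)$ follows from the added assumption $kt=\omega((\log n)^2)$.
\item $\ell\le kt$ follows from the previous item for large $n$.
\item $\ell\le\lfloor\log_2(n/(k^2t^2))\rfloor-1$: since $kt=O(n^{1/2-\delta})$, we have $n/(k^2t^2)=\Omega(n^{2\delta})$, so $\log_2(n/(k^2t^2))\ge 2\delta\log_2 n-O(1)$. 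This exceeds $\delta\log_2 n+1$ for large $n$.
\end{enumerate}
The last item is the only place with any slack to watch, and it is comfortable because the exponent is $2\delta$ while $\ell$ only uses $\delta$.

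\textbf{The query bound.} Since $\ell!\ge 1$,
\[
d\ \ge\ \bigl(n/(k^2t^2)\bigr)^\ell-1\ \ge\ \bigl(c\,n^{2\delta}\bigr)^{\lfloor\delta\log_2 n\rfloor}-1=\exp\bigl(\Omega(\delta^2(\ln n)^2)\bigr),
\]
which is $\exp(\Omega((\ln n)^2))$ for fixed $\delta$. The factor $\ell!$ only helps, so it can be dropped.

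\textbf{The oracle bound.} We have $2^\ell=\Theta(n^{\delta})$ up to a factor of $2$ from the floor. Hence $\bar\gamma=\Theta(k^2n^{\delta}/n^2)$ and $N=\Theta(1/\bar\gamma)=\Theta(n^{2-\delta}/k^2)$.

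I expect no real obstacle. The only care needed is in the hypothesis check: making sure the additional assumption $kt=\omega((\log n)^2)$ is exactly what makes $\ell^2=o(kt)$ hold, and that the lower-order floors and constants are absorbed into $\Theta(\cdot)$ and $\Omega(\cdot)$.
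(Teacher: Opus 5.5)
Your proposal is correct and is essentially the argument the paper intends. The paper only says the result ``immediately follows'' from feeding the SDA bound of Theorem~\ref{thm:sq_sda} into Theorem~\ref{thm:sq_via_sda}; your check of the hypotheses for $\ell=\lfloor\delta\log_2 n\rfloor$, the bound $d\ge (n/(k^2t^2))^{\ell}-1=\exp(\Omega(\delta^2(\ln n)^2))$, and $N=\Theta(1/\bar\gamma)=\Theta(n^{2-\delta}/k^2)$ supply exactly the routine details the paper leaves implicit.
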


\begin{remark}\label{rem:partition}
There is an SQ-preserving post-processing $T_P$ (fix a random partition $P$) that maps the \emph{single}-plant row-mixture with $|S|=K$
to a \emph{multi}-plant row-mixture whose blocks are $S\cap P_i$; the block sizes fluctuate and sum to $K$.
This shows the multi-plant problem is no harder than the single case under this transformation.
However, two issues prevent deriving our equal-size lower bound from this reduction.
First, no SQ post-processing (that preserves the null) can, for all $S$, force \emph{equal} block sizes $|S_i|=k$; hence the reduction does not target our equal-size family.
Second, the subset of equal-size plantings aligned to a fixed partition $P$ has size $\prod_i \binom{|P_i|}{k}$, which is a $t^{-\Omega(kt)}$ fraction of all ordered plantings
$\binom{n}{kt}(kt)!/(k!)^t$.
The SDA bound we prove controls average correlation for \emph{every} subset of size at least $m/d$; it does not apply to exponentially thin, partition-aligned slices.
Our proof therefore also rules out algorithms that might succeed on structured subfamilies but cannot handle a constant fraction of all equal-size plantings.
\end{remark}





\bibliographystyle{plainnat}
\bibliography{references}

\appendix
\section{Technical lemmas for Section~\ref{part:sos}}\label{app:sos_tech}

\subsection{Coefficients, calibration, and constraints proofs}\label{app:sos_constraints}

\begin{proof}[proof of Lemma~\ref{lem:nonedge}]
It suffices to check monomials $p=X_{M'}$. Let
\[
M := M'\cup\{(u,r),(v,r)\}
\]
so that $X_M = x_{u,r}x_{v,r}X_{M'}$ and $e=\{u,v\}\in \mathcal{E}_M$.
Consider $F(G)=\wtE_G[X_M]$.

Writing the Fourier expansion of $F$ and splitting terms by whether $e\in T$, one can express
$F(G)=A(G)+G_e B(G)$ where $A,B$ depend only on edges other than $e$.

We claim the Fourier coefficients satisfy $\widehat{F}(T)=\widehat{F}(T\cup\{e\})$ whenever $T$ differs only by toggling $e$.
Indeed, by Definition~\ref{def:trunc} the coefficient $\widehat{F}(T)$ is proportional to
$\E_{(G,X)\sim \mathcal{G}(n,1/2,k,t)}\big[\chi_T(G)\,P(X)\big]$ (and is $0$ if $T$ is truncated).
Since $P(X)$ is an indicator, we may write
\[
\E\big[\chi_T(G)P(X)\big]=\Prb[P(X)=1]\cdot \E\big[\chi_T(G)\mid P(X)=1\big].
\]
On the event $P(X)=1$ we have $x_{u,r}=x_{v,r}=1$, so in the planted model the edge $e$ is deterministically present in $H$ and hence in $G=G_0\cup H$, i.e. $G_e=+1$ almost surely conditional on $P(X)=1$.
Therefore $\chi_{T\cup\{e\}}(G)=\chi_T(G)\cdot G_e=\chi_T(G)$ on this event, implying
$\E[\chi_T(G)\mid P(X)=1]=\E[\chi_{T\cup\{e\}}(G)\mid P(X)=1]$.
Moreover, since $e\in \mathcal{E}_M$, the test graphs $(V,\,T\cup \mathcal{E}_M)$ and $(V,\,(T\cup\{e\})\cup \mathcal{E}_M)$ coincide, so the truncation-feasibility of $T$ and $T\cup\{e\}$ is identical.
This yields $\widehat{F}(T)=\widehat{F}(T\cup\{e\})$.

Hence $A=B$ and $F(G)=A(G)(1+G_e)$, which evaluates to $0$ when $G_e=-1$ (i.e., $e$ is a non-edge).
\end{proof}

\begin{proof}[proof of Lemma~\ref{lem:concentration}]
We argue for $\wtE_G[1]$; the size statistic is analogous.
By Definition~\ref{def:trunc} with $M=\emptyset$,
\[
\wtE_G[1]=\textstyle\sum_{T:\,|V(T)|\le \tau} \widehat{\wtE[1]}(T)\,\chi_T(G).
\]
Under $G\sim G(n,1/2)$ the characters $\{\chi_T\}$ form an orthonormal basis, hence
\[
\E_G[\wtE_G[1]]=\widehat{\wtE[1]}(\emptyset)=1
\qquad\text{and}\qquad
\Var_G(\wtE_G[1])=\textstyle\sum_{T\neq \emptyset} \widehat{\wtE[1]}(T)^2.
\]
Lemma~\ref{lem:coefbd} (with $M=\emptyset$) bounds
$|\widehat{\wtE[1]}(T)| \le \bigl(kt/(n-\tau)\bigr)^{|V(T)|}$ for every truncation-feasible $T$.
Grouping by $s:=|V(T)|$ and using the crude count
$
\#\{T:|V(T)|=s\}\le \binom{n}{s}\,2^{\binom{s}{2}},
$
we obtain
\[
\Var_G(\wtE_G[1])
\le \textstyle\sum_{s=1}^{\tau} \binom{n}{s}\,2^{\binom{s}{2}}\left(\frac{kt}{n-\tau}\right)^{2s}.
\]
Under $kt=n^{1/2-\varepsilon}$ and $\tau\le (\varepsilon/C)\log n$ (for $C$ sufficiently large),
we bound each summand for $1\le s\le \tau$ as follows.
Using $\binom{n}{s}\le n^s$ and $n-\tau\ge n/2$ for large $n$,
\[
\binom{n}{s}\,2^{\binom{s}{2}}\left(\frac{kt}{n-\tau}\right)^{2s}
\le
n^s\cdot 2^{s^2/2}\cdot \left(\frac{2kt}{n}\right)^{2s}
=
2^{s^2/2}\cdot n^s\cdot n^{-(1+2\varepsilon)s}\cdot 2^{2s}
\le
2^{s^2/2+2s}\cdot n^{-2\varepsilon s}.
\]
Now $2^{s^2/2+2s}=\exp\!\left(O(s^2)\right)$, while $n^{-2\varepsilon s}=\exp\!\left(-2\varepsilon s\log n\right)$.
Since $s\le \tau\le (\varepsilon/C)\log n$, we have $s^2=O(\varepsilon^2(\log n)^2)$ and therefore
$O(s^2)\le (\varepsilon/C)\cdot \varepsilon(\log n)^2 \ll \varepsilon s\log n$ for large enough $C$.
Hence each summand is at most $n^{-\Omega(\varepsilon)}$, and summing over $s\le \tau=O(\varepsilon\log n)$ yields
$\Var_G(\wtE_G[1])\le n^{-\Omega(\varepsilon)}$.

For the size statistic $\wtE_G[\sum_{i=1}^n x_{i,j}]$, expand similarly and apply Lemma~\ref{lem:coefbd}
to each monomial $X_M=x_{i,j}$; summing over $i$ changes the coefficient magnitudes only by a polynomial factor,
and the same variance bound yields concentration uniformly over $j\in[t]$ by a union bound.
\end{proof}

\subsection{Separators for ribbons}\label{app:sos_separators}

\begin{proof}[proof of Lemma~\ref{lem:left_right_sep_full}]
This is the standard submodularity/uncrossing argument for minimum vertex cuts in an undirected graph:
the family of minimum separators is closed under intersection/union of appropriate reachable regions,
yielding unique extremal minimum separators. (See, e.g., the ribbon framework in planted-subgraph SoS lower bounds.)
\end{proof}

\subsection{Factorization and PSD proofs}\label{app:sos_factorization}

\begin{proof}[proof of Lemma~\ref{lem:sep_factor}]
Start from Lemma~\ref{lem:ribbon_expansion} and group each feasible ribbon $R$ by its leftmost/rightmost minimum separators
$S_L(R),S_R(R)$, then apply the canonical factorization
$R=(R_\ell,R_m,R_r)$. The contribution that factors as
$\chi_{W_{R_\ell}}\chi_{W_{R_m}}\chi_{W_{R_r}}$ matches $LQL^\top$.
Terms where $R_\ell$ and $R_r$ share vertices outside the separators are exactly the non-disjointness artifacts,
placed into $E_1$.
\end{proof}

\begin{proof}[proof sketch of Lemma~\ref{lem:Q_psd}]
Fix $s$ and restrict $Q$ to indices $S,S'$ with $|S|,|S'|\le s$.
By Definition~\ref{def:LQ}, each entry is a low-degree polynomial in $G$ with Fourier support on edge-sets $W_R$
whose induced ribbon has $|V(R)|\le \tau$.
The standard approach is to bound the spectral norm of the off-diagonal part via the trace-power method:
for an even integer $m$, expand $\E[\mathrm{Tr}((Q^{(\mathrm{off})})^m)]$ as a sum over labeled collections of feasible
ribbons and show that only terms with sufficiently many repeated edges contribute, yielding
$\|Q^{(\mathrm{off})}\|\le \lambda_{\min}(Q^{(\mathrm{diag})})$ with high probability.
The required coefficient input is the ribbon bound from Lemma~\ref{lem:ribbon_expansion},
namely $|\gamma_{S,S'}(R)|\le (kt/(n-\tau))^{|V(R)|}$.
With this input, the enumeration and concentration bounds follow identically to the middle-ribbon PSD proofs in
\cite{BarakHKKMP19} and \cite{MekaPW15}.

In our setting, the only quantitative change is the bound on Fourier coefficients:
by Lemma~\ref{lem:ribbon_expansion}, every feasible middle-ribbon contribution satisfies
\[
|\gamma_{S,S'}(R)| \;\le\; \left(\frac{kt}{n-\tau}\right)^{|V(R)|},
\]
which is the single-plant bound with $k$ replaced by $kt$.
All combinatorial enumeration steps are unchanged because the middle ribbons for $Q$ have no pinned edges
($K=\emptyset$ in Definition~\ref{def:uvk_ribbon}) and separators are defined purely as vertex cuts in the ribbon graph.
\end{proof}


\begin{proof}[proof of Lemma~\ref{lem:iter_sep}]
Iterate Lemma~\ref{lem:sep_factor} on the successive remainder matrices.
At each step, we group ribbons by their (leftmost/rightmost) minimum separators and peel off the contributions with separator size equal to the current layer, placing everything else into the next remainder.
Equivalently (and more explicitly), one may define $E_{m+1}$ to be the contribution to $M$ from feasible ribbons $R$ with $\sep(R)\ge m+1$; then the above iteration yields the stated decomposition with this choice of remainder.
\end{proof}

\begin{proof}[proof of Lemma~\ref{lem:psd_core}]
Apply Lemma~\ref{lem:iter_sep} with $m=\lfloor d/2\rfloor$ to obtain
\[
M = \sum_{s=0}^{\lfloor d/2\rfloor} L^{(s)} Q^{(s)} (L^{(s)})^\top - E_{\lfloor d/2\rfloor+1},
\]
with each $Q^{(s)}\succeq 0$.

We claim $E_{\lfloor d/2\rfloor+1}=0$ identically.
Recall from Lemma~\ref{lem:iter_sep} that $E_{\lfloor d/2\rfloor+1}$ is defined as the total contribution of
truncation-feasible $(A,B)$-ribbons $R$ with
\[
\sep(R)\ge \lfloor d/2\rfloor+1
\]
to the moment matrix entries $M(A,B)$, where $|A|,|B|\le \lfloor d/2\rfloor$.

Fix any such indices $A,B$ and consider any $(A,B)$-ribbon $R$ that can contribute to $M(A,B)$.
By Definition~\ref{def:sep_ab} and our separator convention, the set $Q:=\supp(A)$ is a valid separator,
hence
\[
\sep(R)\le |\supp(A)|\le |A|\le \lfloor d/2\rfloor.
\]
Therefore, no ribbon contributing to any entry indexed by $|A|,|B|\le \lfloor d/2\rfloor$ can satisfy
$\sep(R)\ge \lfloor d/2\rfloor+1$.
Equivalently, the class of ribbons summed into $E_{\lfloor d/2\rfloor+1}$ is empty, and thus
$E_{\lfloor d/2\rfloor+1}=0$.

Thus $M$ is a sum of PSD matrices $L^{(s)}Q^{(s)}(L^{(s)})^\top$, and therefore $M\succeq 0$.
\end{proof}

\section{Technical Lemmas for Section~\ref{part:sq}}

\subsection{SQ proofs}\label{app:sq_proofs}

\begin{proof}[of Lemma~\ref{lem:sq_corr}]
Let $Z_S(x)=\frac1t\sum_{i=1}^t \mathbf 1\{x_v=1\ \forall v\in S_i\}$.
Then $\widehat D_S(x)=p(2^k Z_S(x)-1)$.
Under $\mathcal D_0$, $\Prb[x_v=1\ \forall v\in S_i\cup T_j]=2^{-2k+\lambda_{ij}}$, so
$2^{2k}\langle Z_S,Z_T\rangle=\frac{1}{t^2}\sum_{i,j}2^{\lambda_{ij}}$, yielding the identity.
To upper bound the sum, relax all constraints on $(\lambda_{ij})$ except $\sum_{i,j}\lambda_{ij}=\Lambda$.
Since $2^x$ is convex, the sum is maximized by concentrating all overlap in:
$\sum_{i,j}2^{\lambda_{ij}}\le (t^2-1)\cdot 1+2^\Lambda$.
Substituting gives $p^2(2^\Lambda-1)/t^2\le (k^2/n^2)2^\Lambda$.
\end{proof}

\begin{proof}[of Lemma~\ref{lem:sq_m}]
We count by selecting the  $t $ blocks in order.  For the  $i $–th block (with  $i=1,2,\dots,t $), there are
\[
  \binom{ n - (i-1)k }{k}
\]
ways to choose which  $k $ elements go into that block from the remaining pool.  Thus
\[
  m  =  
  \prod_{i=1}^t \binom{ n - (i-1)k }{k}
   = 
  \binom{n}{k} \binom{n-k}{k} \cdots \binom{n-(t-1)k}{k}.
\]
But each binomial expands as
\[
  \binom{n - (i-1)k}{k}
   = 
  \frac{(n - (i-1)k)!}{(n - ik)! k!},
\]
so the product telescopes:
\[
  m
  = \prod_{i=1}^t
      \frac{(n - (i-1)k)!}{(n - ik)! k!}
  = \frac{n!}{(n - kt)! (k!)^t}.
\]
Finally, since
 $  \binom{n}{kt}
    = \frac{n!}{(kt)! (n-kt)!},
 $
we get the equivalent form
 $
  m  = \binom{n}{kt} \frac{(kt)!}{(k!)^t}.
 $
\end{proof}

\section{Approximations and tail bounds}\label{app:sq_bounds}

This appendix records standard approximations and tail bounds used in the SQ lower bound,
to make Part~\ref{part:sq} fully self-contained.

\subsection{Hypergeometric basics and domination}

\begin{lemma}\label{lem:hypergeom_setup}
Let $U\subseteq[n]$ be fixed with $|U|=K$ and let $V\subseteq[n]$ be uniformly random with $|V|=K$.
Then $X:=|U\cap V|$ is hypergeometric with
\[
\Prb[X=\ell]=\frac{\binom{K}{\ell}\binom{n-K}{K-\ell}}{\binom{n}{K}}.
\]
Moreover, for every integer $j\ge 0$,
\[
\Prb[X\ge j]\le \binom{K}{j}\left(\frac{K}{n}\right)^{\!j}.
\]
\end{lemma}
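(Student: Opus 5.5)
The statement has two parts: the exact hypergeometric mass function, and a tail bound. I plan to prove them in that order.

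\emph{Mass function.} Since $V$ is uniform among the $\binom{n}{K}$ subsets of size $K$, it suffices to count subsets $V$ with $|U\cap V|=\ell$. Such a $V$ is specified by choosing its $\ell$ elements inside $U$, in $\binom{K}{\ell}$ ways, and its remaining $K-\ell$ elements inside $[n]\setminus U$, in $\binom{n-K}{K-\ell}$ ways. These choices are independent and determine $V$ uniquely. Dividing by $\binom{n}{K}$ gives the displayed formula.

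\emph{Tail bound.} I will use a union bound over candidate overlap sets rather than summing the mass function. On the event $\{X\ge j\}$, the set $U\cap V$ contains some $j$-subset $J\subseteq U$ with $J\subseteq V$. Hence
\[
\Prb[X\ge j]\le \textstyle\sum_{J\subseteq U,\,|J|=j}\Prb[J\subseteq V].
\]
For a fixed $J$ of size $j$, exposing the elements of $V$ sequentially (or counting directly) gives
\[
\Prb[J\subseteq V]=\binom{n-j}{K-j}\Big/\binom{n}{K}=\frac{(K)_j}{(n)_j}=\prod_{i=0}^{j-1}\frac{K-i}{n-i}.
\]
The one inequality needed is $(K-i)/(n-i)\le K/n$ for $K\le n$ and $0\le i<j\le K$. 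This is equivalent to $n(K-i)\le K(n-i)$, that is, $-ni\le -Ki$, which holds since $K\le n$. Therefore $\Prb[J\subseteq V]\le (K/n)^j$. There are $\binom{K}{j}$ choices of $J$, which yields $\Prb[X\ge j]\le\binom{K}{j}(K/n)^j$.

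\emph{Edge cases.} If $j>K$, both sides vanish, since $\binom{K}{j}=0$ and $X\le K$. If $j=0$, both sides equal $1$. If $K>n$ the statement is vacuous, so we may assume $K\le n$, as the inequality step requires. No step is a genuine obstacle; the only point needing care is the monotonicity inequality $(K-i)/(n-i)\le K/n$, which is where $K\le n$ is used.
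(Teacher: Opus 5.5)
Your proposal is correct and follows essentially the same route as the paper: a union bound over the $\binom{K}{j}$ subsets $J\subseteq U$ of size $j$, combined with $\Prb[J\subseteq V]=\prod_{r=0}^{j-1}\frac{K-r}{n-r}\le (K/n)^j$. You also supply the counting derivation of the mass function and the verification of $(K-i)/(n-i)\le K/n$, both of which the paper leaves implicit.
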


\begin{proof}
For the tail bound: choose $j$ elements of $U$ that land in $V$.
By union bound over all $\binom{K}{j}$ subsets $J\subseteq U$ of size $j$,
\[
\Prb[X\ge j]\le \sum_{J\subseteq U,|J|=j}\Prb[J\subseteq V].
\]
For fixed $J$, since $V$ is a uniform $K$-subset,
\[
\Prb[J\subseteq V]=\frac{\binom{n-j}{K-j}}{\binom{n}{K}}
=\prod_{r=0}^{j-1}\frac{K-r}{n-r}\le \left(\frac{K}{n}\right)^{\!j}.
\]
\end{proof}

\subsection{Binomial/Poisson approximation in the sparse regime}

\begin{lemma}\label{lem:hypergeom_sparse_asymp}
Let $K=K(n)$ satisfy $K=o(\sqrt{n})$ and let $\ell=O(\log n)$ satisfy $\ell^2=o(K)$.
Let $X$ be as in Lemma~\ref{lem:hypergeom_setup}. Then
\[
\Prb[X=\ell]
=\frac{1}{\ell!}\left(\frac{K^2}{n}\right)^{\!\ell}\,(1+o(1)),
\qquad\text{as }n\to\infty.
\]
\end{lemma}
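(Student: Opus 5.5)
We compute the ratio $\Prb[X=\ell]\big/\bigl(\tfrac{1}{\ell!}(K^2/n)^{\ell}\bigr)$ and show that it tends to $1$. Start from the exact formula of Lemma~\ref{lem:hypergeom_setup} and use falling factorials:
\[
\Prb[X=\ell]=\binom{K}{\ell}\cdot\frac{(K)_{\ell}\,(n-K)_{K-\ell}}{(n)_K},
\]
since $\binom{n-K}{K-\ell}/\binom{n}{K} = \frac{K!}{(K-\ell)!}\cdot\frac{(n-K)_{K-\ell}}{(n)_K}$. Write $\binom{K}{\ell}=\frac{(K)_\ell}{\ell!}$. The target is then
\[
\frac{(K)_\ell^2}{\ell!}\cdot \frac{(n-K)_{K-\ell}}{(n)_K}.
\]
We estimate each factor separately.

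\emph{Step 1.} We have
\[
(K)_\ell = K^\ell\prod_{r<\ell}(1-r/K) = K^\ell\exp\bigl(-O(\ell^2/K)\bigr) = K^\ell(1+o(1)),
\]
using $\ell^2=o(K)$. Hence $(K)_\ell^2 = K^{2\ell}(1+o(1))$.

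\emph{Step 2.} We split the denominator as $(n)_K = (n)_\ell\,(n-\ell)_{K-\ell}$. The first piece satisfies
\[
(n)_\ell = n^\ell\exp\bigl(-O(\ell^2/n)\bigr) = n^\ell(1+o(1)).
\]
It remains to show that
\[
\frac{(n-K)_{K-\ell}}{(n-\ell)_{K-\ell}} = \prod_{r=0}^{K-\ell-1}\frac{n-K-r}{n-\ell-r}=\prod_{r}\Bigl(1-\frac{K-\ell}{n-\ell-r}\Bigr)\to 1.
\]

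\emph{Step 3 (the main point).} Each factor is at least $1-K/(n-K)$, and there are fewer than $K$ factors. Taking logarithms, with $-\log(1-x)\le 2x$ for small $x$, the total log-deficit is at most
\[
2K\cdot \frac{K}{n-K}=O(K^2/n).
\]
This is exactly where $K=o(\sqrt n)$ is needed: it gives $K^2/n=o(1)$, so the product is $1-o(1)$, while it is trivially at most $1$.

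Combining the three steps,
\[
\Prb[X=\ell]=\frac{K^{2\ell}}{\ell!\,n^\ell}(1+o(1)).
\]
The hypothesis $\ell=O(\log n)$ is not even needed beyond $\ell^2=o(K)$ (it only guarantees $\ell\le K$). The only real obstacle is Step 3: it fails without the sparse assumption, since otherwise the ratio picks up the factor $e^{-K^2/n}$ familiar from the Poisson limit.
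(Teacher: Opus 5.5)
Your proof is correct, and it takes a somewhat different and more explicit route than the paper. The paper approximates $\binom{K}{\ell}$ by $K^\ell/\ell!$. It then asserts, via a ``sequential-exposure'' comparison, that the hypergeometric pmf is within a relative factor $1+O(K^2/n)$ of the $\mathrm{Bin}(K,K/n)$ pmf. Finally it absorbs $(1-K/n)^{K-\ell}=1+o(1)$. That binomial comparison is stated, not proved.

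You skip the binomial intermediary. You work from the exact identity $\Prb[X=\ell]=\frac{(K)_\ell^2}{\ell!}\cdot\frac{(n-K)_{K-\ell}}{(n)_K}$, split $(n)_K=(n)_\ell\,(n-\ell)_{K-\ell}$, and bound the three resulting pieces directly. Your Steps~1--3 therefore supply the justification the paper's comparison step leaves out. They also make the true error visible: it is $O(\ell^2/K)+O(K^2/n)$. The $\ell^2/K$ part comes both from $\binom{K}{\ell}$ and from depletion of the success pool, $(K)_\ell/K^\ell$. The paper's stated $1+O(K^2/n)$ sampling correction does not account for that depletion factor, which is harmless only because $\ell^2=o(K)$.

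The paper's route is shorter and makes the binomial/Poisson intuition transparent. Yours is self-contained and fully verifiable. It also correctly shows that the hypothesis $\ell=O(\log n)$ is superfluous once $\ell^2=o(K)$ and $K=o(\sqrt{n})$ hold.
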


\begin{proof}
Start from the exact expression:
\[
\Prb[X=\ell]=\frac{\binom{K}{\ell}\binom{n-K}{K-\ell}}{\binom{n}{K}}.
\]
We approximate each factor.

First,
\[
\binom{K}{\ell}=\frac{K^\ell}{\ell!}\prod_{r=0}^{\ell-1}\left(1-\frac{r}{K}\right)
=\frac{K^\ell}{\ell!}\left(1+O\!\left(\frac{\ell^2}{K}\right)\right).
\]

Second, using the sequential-exposure view, the hypergeometric is close to $\mathrm{Bin}(K, K/n)$ when $K=o(\sqrt n)$:
the sampling-without-replacement correction introduces a relative error
$1+O(K^2/n)=1+o(1)$ uniformly for $\ell=O(\log n)$.
Thus
\[
\Prb[X=\ell]
=\binom{K}{\ell}\left(\frac{K}{n}\right)^{\!\ell}\left(1-\frac{K}{n}\right)^{K-\ell}(1+o(1))
=\frac{1}{\ell!}\left(\frac{K^2}{n}\right)^{\!\ell}(1+o(1)),
\]
since $(1-K/n)^{K-\ell}=\exp(-K^2/n+o(1))=1+o(1)$ under $K=o(\sqrt n)$.
\end{proof}

\subsection{A Poisson tail bound}

\begin{lemma}\label{lem:poisson_tail_full}
Let $\lambda\ge 0$ and integer $\ell\ge 0$ with $\lambda<\ell+2$. Then
\[
\sum_{j=\ell+1}^\infty \frac{\lambda^j}{j!}
\le \frac{\lambda^{\ell+1}}{(\ell+1)!}\cdot \frac{1}{1-\frac{\lambda}{\ell+2}}.
\]
\end{lemma}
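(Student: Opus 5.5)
The plan is to compare the tail term by term with a geometric series. Factor out the leading term $\lambda^{\ell+1}/(\ell+1)!$ and write, for $j=\ell+1+i$ with $i\ge 0$,
\[
\frac{\lambda^{j}}{j!}=\frac{\lambda^{\ell+1}}{(\ell+1)!}\prod_{r=2}^{i+1}\frac{\lambda}{\ell+r},
\]
where the empty product (for $i=0$) equals $1$.

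Each factor satisfies $\lambda/(\ell+r)\le \lambda/(\ell+2)$, since $r\ge 2$. Hence the $i$-th term is at most
\[
\frac{\lambda^{\ell+1}}{(\ell+1)!}\Bigl(\frac{\lambda}{\ell+2}\Bigr)^{i}.
\]

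All terms are nonnegative because $\lambda\ge 0$. The series $\sum_i q^i$ with $q=\lambda/(\ell+2)$ converges because the hypothesis $\lambda<\ell+2$ gives $0\le q<1$. Summing over $i\ge 0$ then yields exactly
\[
\frac{\lambda^{\ell+1}}{(\ell+1)!}\cdot\frac{1}{1-\lambda/(\ell+2)}.
\]

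No step is really an obstacle. The only points needing care are the ratio bound $\lambda/(\ell+r)\le\lambda/(\ell+2)$ for $r\ge 2$, and the observation that, since all terms are nonnegative, exchanging the order of summation is unnecessary: one simply compares partial sums with those of the dominating geometric series and passes to the limit.
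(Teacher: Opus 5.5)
Your proposal is correct and is essentially the paper's argument: the paper bounds the ratio of consecutive terms by $\lambda/(\ell+2)$ and dominates the tail by a geometric series, which is exactly your factorization $\prod_{r=2}^{i+1}\lambda/(\ell+r)\le(\lambda/(\ell+2))^i$ written out explicitly.
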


\begin{proof}
For $j\ge \ell+1$,
\[
\frac{\lambda^{j+1}/(j+1)!}{\lambda^j/j!}=\frac{\lambda}{j+1}\le \frac{\lambda}{\ell+2}.
\]
Hence the tail is dominated by a geometric series with ratio at most $\lambda/(\ell+2)<1$:
\[
\sum_{j=\ell+1}^\infty \frac{\lambda^j}{j!}
\le \frac{\lambda^{\ell+1}}{(\ell+1)!}\sum_{r=0}^\infty \left(\frac{\lambda}{\ell+2}\right)^{\!r}
= \frac{\lambda^{\ell+1}}{(\ell+1)!}\cdot \frac{1}{1-\frac{\lambda}{\ell+2}}.
\]
\end{proof}

\subsection{Derivation of the bound used in Theorem~\ref{thm:sq_sda}}

\begin{lemma}\label{lem:moment_bound_high_overlap}
Let $X$ be hypergeometric as in Lemma~\ref{lem:hypergeom_setup} with parameter $K$.
Then for any integer $\ell\ge 0$,
\[
\E\!\left[2^X\mathbf 1\{X>\ell\}\right]
\le \sum_{j>\ell} \frac{(2K^2/n)^j}{j!}.
\]
\end{lemma}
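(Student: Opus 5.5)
Write $\E[2^X\mathbf 1\{X>\ell\}]=\sum_{j>\ell}2^j\,\Prb[X=j]$, bound each point probability $\Prb[X=j]$ by $(K^2/n)^j/j!$, and then absorb the factor $2^j$ into the parameter of the Poisson-type term. This gives exactly $\sum_{j>\ell}(2K^2/n)^j/j!$. So everything reduces to the pointwise inequality
\[
\Prb[X=j]\le \frac{1}{j!}\Bigl(\frac{K^2}{n}\Bigr)^{j}\qquad\text{for every integer } j\ge 0,
\]
which needs no asymptotics; this is what separates it from Lemma~\ref{lem:hypergeom_sparse_asymp}.

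To prove the pointwise bound, I start from the exact mass in Lemma~\ref{lem:hypergeom_setup}, $\Prb[X=j]=\binom{K}{j}\binom{n-K}{K-j}/\binom{n}{K}$. Dropping the event that no further elements of $U$ lie in $V$ gives $\Prb[X=j]\le\Prb[X\ge j]$. The union bound from Lemma~\ref{lem:hypergeom_setup} then gives $\Prb[X\ge j]\le \binom{K}{j}(K/n)^j$. Finally, $\binom{K}{j}\le K^j/j!$ (or $\binom{K}{j}=0$ when $j>K$). Combining these three steps yields $\Prb[X=j]\le K^{2j}/(j!\,n^j)$, as required.

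Substituting this into the sum, $\sum_{j>\ell}2^j\Prb[X=j]\le\sum_{j>\ell}(2K^2/n)^j/j!$. The rearrangement is legitimate because all terms are nonnegative and only finitely many are nonzero, since $X\le K$. There is no real obstacle. The only point needing care is that the bound must hold for all $j$, including $j>K$ where the mass vanishes. That is why I use the crude union bound rather than the binomial/Poisson approximation of Lemma~\ref{lem:hypergeom_sparse_asymp}: that lemma carries a $(1+o(1))$ factor and requires $\ell^2=o(K)$.
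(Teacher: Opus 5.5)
Your proof is correct and is essentially the paper's proof: both use $\Prb[X=j]\le\Prb[X\ge j]\le\binom{K}{j}(K/n)^j\le (K^2/n)^j/j!$, via the union bound in Lemma~\ref{lem:hypergeom_setup}, and then sum against $2^j$ over $j>\ell$. The step the paper calls a ``tail-sum bound'' is exactly your termwise comparison $\Prb[X=j]\le\Prb[X\ge j]$.
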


\begin{proof}
By Lemma~\ref{lem:hypergeom_setup} and $\binom{K}{j}\le K^j/j!$,
\[
\Prb[X\ge j]\le \binom{K}{j}\left(\frac{K}{n}\right)^{\!j}\le \frac{(K^2/n)^j}{j!}.
\]
Using the tail-sum bound,
\[
\E\!\left[2^X\mathbf 1\{X>\ell\}\right]
=\sum_{x>\ell} 2^x \Prb[X=x]
\le \sum_{j>\ell} 2^j \Prb[X\ge j]
\le \sum_{j>\ell} 2^j \cdot \frac{(K^2/n)^j}{j!}
= \sum_{j>\ell} \frac{(2K^2/n)^j}{j!}.
\]
\end{proof}

\end{document}